\let\doendproof\endproof
\renewcommand\endproof{~\hfill$\qed$\doendproof}
\renewcommand{\orcidID}[1]{\orcidlink{#1}}
\newcommand{\lcp}{\mathit{lcp}}
\newcommand{\Bord}{\mathsf{Bord}}
\newcommand{\rle}{\mathrm{rle}}
\newcommand{\RLESB}{\mathsf{rSBord}}
\newcommand{\lub}{\ell^\star}
\newcommand{\LUB}{\mathit{LUB}}
\newcommand{\RmBordf}{\textproc{rm-long-bordered}}
\newcommand{\Candf}{\textproc{candidate}}
\newcommand{\LongSUBf}{\textproc{longest-short-ub}}
\newcommand{\BG}{\mathsf{BG}}
\newcommand{\pp}{\mathit{pp}}
\newcommand{\first}{\mathsf{first}}
\newcommand{\last}{\mathsf{last}}
\newcommand{\tRLESA}{\mathsf{tRLESA}}
\newcommand{\tRLELCP}{\mathsf{tRLELCP}}
\newcommand{\tRLEISA}{\mathsf{tRLEISA}}
\newcommand{\Beg}[1]{\mathsf{beg}_{#1}}
\newcommand{\End}[1]{\mathsf{end}_{#1}}
\newcommand{\Exp}[1]{\mathsf{exp}_{#1}}
\newcommand{\RLElcp}[4]{\mathsf{rlcp}(#1,#2,#3,#4)}
\newcommand{\LongestPref}[4]{\mathsf{ridx}_{#1,#2}(#3,#4)}
\newtheorem{observation}{Observation}
\begin{document}
\title{Longest Unbordered Factors on Run-Length Encoded Strings}
\author{
  Shoma~Sekizaki\inst{1} \and
  Takuya~Mieno\inst{1}\orcidID{0000-0003-2922-9434}
}
\authorrunning{S. Sekizaki and T. Mieno}
\institute{
{University of Electro-Communications, Chofu, Japan}\\
    \email{s2431091@edu.cc.uec.ac.jp}, \email{tmieno@uec.ac.jp}
}
\maketitle              \begin{abstract}
A border of a string is a non-empty proper prefix of the string
  that is also a suffix.
  A string is unbordered if it has no border.
  The longest unbordered factor is a fundamental notion in stringology,
  closely related to string periodicity.
  This paper addresses the longest unbordered factor problem:
  given a string of length $n$, the goal is to compute
  its longest factor that is unbordered.
  While recent work has achieved subquadratic and near-linear time algorithms for this problem,
  the best known worst-case time complexity remains $O(n \log n)$~[Kociumaka et al., ISAAC 2018].
In this paper, we investigate the problem in the context of compressed string processing, particularly focusing on run-length encoded (RLE) strings.
  We first present a simple yet crucial structural observation relating unbordered factors and RLE-compressed strings.
  Building on this, we propose an algorithm that solves the problem in $O(m^{1.5} \log^2 m)$ time and $O(m \log^2 m)$ space,
  where $m$ is the size of the RLE-compressed input string.
To achieve this, our approach simulates a key idea from the $O(n^{1.5})$-time algorithm by [Gawrychowski et al., SPIRE 2015],
  adapting it to the RLE setting through new combinatorial insights.
  When the RLE size $m$ is sufficiently small compared to $n$, our algorithm may show linear-time behavior in $n$,
  potentially leading to improved performance over existing methods in such cases.

  \keywords{string algorithms \and unbordered factors \and run-length encoding}
\end{abstract}
\section{Introduction}\label{sec:intro}

A non-empty string $b$ is called a \emph{border} of another string $T$ if $b$ is both a prefix and a suffix of $T$.
A string is said to be \emph{bordered} if it has a border, and \emph{unbordered} otherwise.
Unbordered factors are known to have a deep connection with the smallest period of the string.
The length of a string $uv$ is called a \emph{period} of a string $T$ if $T = (uv)^k u$ for some strings $u, v$ and an integer $k \geq 1$.
The concept of string periodicity is fundamental and has applications in various areas of string processing,
including pattern matching, text compression, and sequence assembly in bioinformatics~\cite{KnuthMP77,CrochemoreMRS99,MargaritisS95}.

In 1979, Ehrenfeucht and Silberger~\cite{EhrenfeuchtS79} posed the problem of determining the conditions under which $\tau(T) = \pi(T)$ holds
for a string $T$ of length $n$, where $\tau(T)$ denotes the length of the longest unbordered factor of $T$ and $\pi(T)$ denotes the smallest period of $T$.
They further conjectured that $\tau(T) \leq n/2$ implies $\tau(T) = \pi(T)$.
However, this conjecture was disproved by Assous and Pouzet~\cite{AssousP79},
who provided a counterexample.
Subsequently, some progress was made toward the conjecture~\cite{Duval82,MignosiZ02,DuvalHN08,HarjuN04,Holub05,HarjuN07}.
Finally, in 2012, Holub and Nowotka~\cite{HolubN12} solved this longstanding open problem,
showing that $\tau(T) = \pi(T)$ holds if $\tau(T) \leq 3n/7$, and that this bound is tight
due to the counterexample of~\cite{AssousP79}.

This result led to increased research activity
on algorithms for computing the longest unbordered factor~\cite{DuvalLL14,LoptevKS15,GawrychowskiKSS15,KociumakaKMP18}.
As a special case, when a string of length $n$ is periodic (i.e., its smallest period is at most $n/2$),
its longest unbordered factor can be computed in $O(n)$ time~\cite{DuvalLL14}.
Unfortunately, since many strings are non-periodic on average~\cite{LoptevKS15}, this linear-time approach has limited applicability.
For the general case, a straightforward $O(n^2)$-time algorithm can be designed by constructing \emph{border arrays}~\cite{KnuthMP77},
which can be computed in linear time, for all suffixes of a string $T$ of length $n$.
The resulting $n$ border arrays indicate whether each factor of $T$ is bordered or unbordered.
The first non-trivial algorithm for computing the longest unbordered factor was proposed by Loptev et al.~\cite{LoptevKS15},
who presented an algorithm with average-case running time $O(n^2 / \sigma^4)$,
where $\sigma$ is the alphabet size.
Furthermore, Cording et al.~\cite{CordingGKK21} proved that the expected length of the longest unbordered factor of a random string is $n - O(\sigma^{-1})$,
and used this result to propose an average-case $O(n)$-time algorithm.
In terms of worst-case time complexity, all of the above are quadratic-time algorithms.
The first worst-case subquadratic-time algorithm was given by Gawrychowski et al.~\cite{GawrychowskiKSS15},
whose algorithm runs in $O(n\sqrt{n})$ time.
We will later review the basic strategy of their algorithm, which we simulate in our approach.
The state-of-the-art algorithm for this problem is an $O(n \log n)$-time algorithm proposed by Kociumaka et al.~\cite{KociumakaKMP18,KociumakaRRW24},
which exploits combinatorial properties of unbordered factors and sophisticated data structures,
including the \emph{prefix-suffix query} (PSQ) data structure.
As of 2018~\cite{KociumakaKMP18}, their algorithm was reported to run in $O(n \log n \log^2 \log n)$ time in the worst case
due to the cost of constructing the PSQ data structures.
Later improvements~\cite{KociumakaRRW24} sped up the construction of the PSQ data structure to linear time,
bringing the overall algorithm down to $O(n \log n)$ time.
Whether the longest unbordered factor can be computed in $O(n)$ time remains open. 

In this paper, instead of directly aiming to an $O(n)$-time algorithm,
we propose an efficient solution in the context of \emph{compressed string processing}.
Especially, this work focuses on the \emph{run-length encoding} (RLE) of a string.
We first give a simple but important relationship between unbordered strings and RLE strings (Lemma~\ref{lem:unbordered}).
Using this relationship, we propose an RLE-based algorithm for computing all longest unbordered factors that runs in $O(m \sqrt{m} \log^2 m)$ time and uses $O(m \log^2 m)$ space,
where $m$ is the size of the RLE-compressed string.
When $m$ is sufficiently small (e.g., $m < n^{2/3 - \varepsilon}$ for a small constant $\varepsilon > 0$), our approach achieves $O(n)$ time,
thus improving the worst-case complexity over existing methods for such cases.
On the one hand, the high-level idea of our approach is inspired by the algorithm of Gawrychowski et al.~\cite{GawrychowskiKSS15},
which achieves subquadratic time via a non-trivial combination of fundamental string data structures and combinatorial techniques on strings.
On the other hand, our algorithm differs in details and require techniques specially tailored to unbordered factors in RLE strings,
particularly in Sections~\ref{sec:candidate} and~\ref{sec:longbordercheck}.

Several proofs are omitted due to space limitations.
All the omitted proofs are provided in Appendix~\ref{app:proof}.
 \section{Preliminaries}\label{sec:prep}
Let $\Sigma$ be an alphabet.
An element in $\Sigma$ is called a character.
An element in $\Sigma^\star$ is called a string.
The empty string is the string of length $0$, which is denoted by $\varepsilon$.
A string in which all characters are identical is called a unary string.
The concatenation of strings $S$ and $T$ is written as $S\cdot T$, or simply $ST$ when there is no confusion.
Let $T$ be a non-empty string.
If $T = X\cdot Y\cdot Z$ holds for some strings $X, Y$, and $Z$,
then $X$, $Y$, and $Z$ are called a prefix, a factor, and a suffix of $T$, respectively.
Further, they are called a proper prefix, a proper factor, and a proper suffix of $T$
if $X \ne T$, $Y \ne T$, and $Z \ne T$, respectively.
A non-empty string $B$ is called a border of $T$
if $B$ is both a proper prefix and a proper suffix of $T$.
We call the occurrence of $B$ as a prefix (resp.~suffix) of $T$
the prefix-occurrence (resp.~suffix-occurrence) of border $B$.
We call the longest border of $T$ \emph{the} border of $T$.
A string $T$ is said to be bordered If $T$ has a border,
and is said to be unbordered otherwise.
We denote by $|T|$ the length of $T$.
For an integer $i$ with  $1\le i\le |T|$,
we denote by $T[i]$ the $i$th character of $T$.
For integers $i, j$ with  $1\le i \le j \le |T|$,
we denote by $T[i.. j]$ the factor of $T$ that starts at position $i$ and ends at position $j$.
For strings $S, T$, we denote by $\lcp(S, T)$ the length of the longest common prefix (in short, lcp) of $S$ and $T$.
An integer $p$ with $1\le p \le |T|$ is called a period of $T$
if $T[i] = T[i+p]$ holds for all $i$ with $1 \le i \le |T|-p$.
We call the smallest period of $T$ \emph{the} period of $T$.
The \emph{border array} $\Bord_T$ of a string $T$ is an array of length $|T|$,
where $\Bord_T[i]$ stores the length of the border of $T[1..i]$ for each $1 \leq i \leq |T|$~\cite{KnuthMP77}.
The \emph{border-group array} $\BG_T$ of a string $T$ is an array of length $|T|$ such that, for each $1 \le i \le |T|$,
$\BG_T[i]$ stores the length of the shortest border of $T[1.. i]$ whose smallest period equals that of $T[1..i]$ if such a border exists,
and $\BG_T[i] = i$ otherwise~\cite{MitaniMSH24}.
It is known that the border array and the border-group array of a string $T$
can be computed by $O(|T|)$ character comparisons~\cite{KnuthMP77,MitaniMSH24}.

The \emph{range maximum query} (\emph{RMQ}) over an integer array $A$ of length $N$ is,
given a query range $[i, j] \subseteq [1, N]$,
to output a position $p$ such that $A[p]$ is a maximum value among sub-array $A[i.. j]$.
The \emph{range minimum query} (\emph{RmQ}) is defined analogously.
The following result is known.
\begin{lemma}[e.g.,~\cite{BenderF00}]\label{lem:rmq}
  There is a data structure of size $O(N)$ that can answer
  any RMQ (and RmQ) over an integer array of length $N$ in $O(1)$ time.
  The data structure can be constructed in $O(N)$ time.
\end{lemma}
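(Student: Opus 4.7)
The plan is to follow the classical two-level construction of Bender and Farach-Colton; I treat RmQ, since RMQ is symmetric. First, partition $A$ into consecutive blocks of length $b = \lceil (\log N)/4 \rceil$ and form a compressed array $A'$ of length $N/b$ storing the position of the minimum in each block. Over $A'$ I would build the standard sparse table that records, for every index $i$ and every $k \le \log(N/b)$, the position of the minimum in $A'[i \mathinner{.\,.} i+2^k-1]$. This costs $O((N/b)\log N) = O(N)$ space and time, and it answers the minimum over any union of consecutive whole blocks in $O(1)$ time by combining two overlapping power-of-two intervals.

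Queries confined to a single block are handled via a shared lookup table keyed by each block's Cartesian-tree shape. I would encode each block by the $2b$-bit sequence of stack pushes and pops produced while constructing its Cartesian tree; this signature is a complete invariant for every intra-block range-minimum answer. There are at most $C_b = O(4^b / b^{3/2})$ signatures, so the chosen $b$ makes this $O(\sqrt{N}/(\log N)^{3/2})$. For each signature and each of the $O(b^2)$ sub-ranges I would precompute the position of the minimum, obtaining a shared table of total size $o(N)$. Each block then stores only its signature identifier, computed in $O(b)$ time via one stack pass, for $O(N)$ total.

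A general query $[i,j]$ then reduces to a constant number of $O(1)$ lookups: if $i$ and $j$ lie in the same block, one shared-table lookup suffices; otherwise, combine the table answers for the suffix of the block containing $i$ and the prefix of the block containing $j$ with the sparse-table minimum on $A'$ over the intermediate full blocks, taking the overall argmin. The main obstacle is not any individual step but the bookkeeping that keeps all three ingredients, namely the sparse table on $A'$, the shared intra-block table, and the per-block signature identifiers, within $O(N)$ space simultaneously. This rests on choosing $b$ precisely enough that $(N/b)\log N = O(N)$ and $C_b = O(\sqrt{N})$, together with a short correctness check that two blocks with identical Cartesian-tree signatures must agree on the position of the minimum over every sub-range; granted these, both the query time and the construction time come out to the claimed bounds.
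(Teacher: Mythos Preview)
Your proposal is a correct and complete sketch of the classical Bender--Farach-Colton two-level scheme: sparse table on block minima plus Cartesian-tree signatures for intra-block queries, with the block size $b = \Theta(\log N)$ chosen so that both the sparse table and the shared lookup table fit in $O(N)$ space. The arithmetic checks out, and the correctness of the Cartesian-tree signature argument is standard.

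The paper, however, does not prove this lemma at all: it is stated as a known result with the citation~\cite{BenderF00} and used as a black box throughout. So there is no ``paper's own proof'' to compare against; what you have written is precisely the construction that the cited reference contains, and it is correct.
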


In what follows, we fix an arbitrarily \emph{non-unary} string $T$ of length $n$ for our purpose.
This is because the longest unbordered factor of a unary string $\mathtt{a}\cdots\mathtt{a}$ is simply $\mathtt{a}$.

 \section{Tools for RLE strings}
This section provides some tools for RLE strings
that are commonly used in Section~\ref{sec:algo}.

\subsection{Run-Length Encoding; RLE}
The Run-Length Encoding (RLE) of a string $T$,
denoted by $\rle(T)$,
is a compressed representation of $T$
that encodes every maximal character run $T[i.. i+e-1]$ in $T$ by $c^e$ if
(1) $T[j] = c$ for all $j$ with $i \le j \le i+e-1$,
(2) $i = 1$ or $T[i-1] \ne c$, and
(3) $i+e-1 = n$ or $T[i+e] \ne c$.
We simply call a maximal character run in $T$ a run in $T$.
Also, we call the number $e$ of characters $c$ in a run $c^e$ the exponent of the run.
The RLE size of string $T$, denoted by $r(T)$, is the number of runs in $T$.
For each $i$ with $1 \le i \le r(T)$,
we denote by $R_i$ the $i$th run of $\rle(T)$.
Also, we denote by $\Beg{i}$~(resp.,~$\End{i}$) the beginning~(resp.,~the ending) position of $R_i$,
and by $\Exp{i}$ the exponent of $R_i$.
A factor of $T$ is said to be \emph{RLE-bounded}
if the factor starts at $\Beg{i}$ and ends at $\End{j}$ for some $i, j$ with $i \le j$.
In what follows, we use $m$ to denote the RLE size of the given string $T$.
\begin{example}
  The RLE of string $T = \mathtt{aaabbcccccabbbb}$ is
  $\texttt{a}^3\texttt{b}^2\texttt{c}^5\texttt{a}^1\texttt{b}^4$.
  The exponents of the first two runs $\texttt{a}^3$ and $\texttt{b}^2$ are three and two, respectively.
  The factor $T[4.. 11] = \texttt{bbccccca}$ of $T$ is RLE-bounded since it starts at $\Beg{2} = 4$ and ends at $\End{4} = 11$.
  The RLE size of $T$ is $5$. \end{example}
The following lemma establishes a significant connection between unbordered strings and RLE strings.
\begin{lemma}\label{lem:unbordered}
  Let $u$ be a string of length at least two, and let $a = u[1]$ and $b = u[|u|]$.
  If $u$ is unbordered, then both $au$ and $ub$ are also unbordered.
\end{lemma}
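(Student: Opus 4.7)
The plan is to prove the contrapositive for $au$: assuming $au$ has a border, I will construct a border of $u$, contradicting the hypothesis. The statement for $ub$ then follows by the fully symmetric argument (or equivalently by applying the $au$ case to the reverse of $u$), so I only lay out the forward direction.

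Suppose $B$ is a border of $au$, of length $k$ with $1 \le k \le |u|$ (since $|au| = |u|+1$). I would split into three cases on $k$. If $k = 1$, then $B$ simultaneously equals the first character of $au$, which is $a$, and the last character of $au$, which is $u[|u|] = b$, so $a = b$; since $|u| \ge 2$, the length-one prefix $u[1]=a$ and length-one suffix $u[|u|]=b$ then form a proper border of $u$. If $2 \le k \le |u|-1$, I would write $B$ in two ways: as a prefix of $au$ it is $a \cdot u[1..k-1]$, and as a suffix of $au$ of length $k \le |u|$ it is $u[|u|-k+1..|u|]$. Equating these and discarding the common first character $a$ yields $u[1..k-1] = u[|u|-k+2..|u|]$, which is a proper border of $u$ of length $k-1 \ge 1$.

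The remaining case is $k = |u|$. Here the same two expressions give $u = a \cdot u[1..|u|-1]$, so $u[i] = u[i-1]$ for every $2 \le i \le |u|$, forcing $u$ to be the unary string $a^{|u|}$. Any unary string of length at least two has a nonempty proper border (for instance, the single character $a$), again contradicting the unborderedness of $u$.

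The only subtlety is guarding the boundary cases $k=1$ and $k=|u|$, where the natural shift argument either degenerates or runs off the end of $u$; in the former the conclusion is immediate from $a=b$, and in the latter one has to observe separately that the forced equality turns $u$ into a unary string, which is itself bordered. Outside these boundaries the proof is a one-line index shift.
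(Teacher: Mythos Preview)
Your proof is correct and follows essentially the same contrapositive argument as the paper: assume a border of $au$ of length $k$, derive $a=b$ when $k=1$, and otherwise strip the leading $a$ to obtain a border of $u$. The only difference is that you split off $k=|u|$ as a separate case, but this is unnecessary: your index-shift argument for $2\le k\le |u|-1$ already works verbatim at $k=|u|$, yielding the length-$(|u|{-}1)$ border $u[1..|u|-1]=u[2..|u|]$, which is exactly what the paper's uniform treatment of $k>1$ produces.
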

\begin{proof}
  For the sake of contradiction, assume that $au$ is bordered.
  Let $k$ be the length of the border of $au$.
  If $k = 1$, then the border of $au$ is $a$, which implies $a = b$, contradicting the assumption that $u$ is unbordered.
  If $k > 1$, let $x$ be the border of $au$.
  Then, $x[2.. k]$ is a border of $u$, a contradiction.
  Therefore, $au$ must be unbordered.
  The proof for $ub$ is symmetric.
\end{proof}
From Lemma~\ref{lem:unbordered}, any longest unbordered factor of a non-unary string $T$ must be RLE-bounded.
Furthermore, the number of occurrences of longest unbordered factors is at most $m-1$
since any factor starting at the beginning position of the rightmost run is bordered.
Also, the upper bound $m-1$ is tight:
For string $(\mathtt{a}^e\mathtt{b}^e)^\frac{m}{2}$ where $e$ is a positive integer,
all the occurrences of $\mathtt{a}^e\mathtt{b}^e$ and $\mathtt{b}^e\mathtt{a}^e$ are
the longest unbordered factors.
Similarly, the next observation holds:
\begin{observation}\label{obs:contract}
  Let $w$ be a non-empty string.
  If $w\cdot w[|w|]$ has a border of RLE size $p$, then $w$ also has a border of RLE size $p$.
\end{observation}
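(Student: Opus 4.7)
The plan is to take any border $B$ of $w' := w \cdot w[|w|]$ of RLE size $p$ and produce a border of $w$ of the same RLE size by chopping the last character off $B$. I would begin by setting $a := w[|w|]$ and $k := |B|$. Since $B$ is a proper prefix of $w'$ of length at most $|w|$, it coincides with $w[1..k]$; since $B$ is a proper suffix of $w'$, its last character is the appended $a$. The crucial starting observation is that when $k \ge 2$, the second-to-last character of $B$ is $w'[|w|] = w[|w|] = a$ as well, so the last run of $B$ consists of at least two $a$'s.

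Next I would propose $B' := B[1..k-1]$ as a border of $w$ of RLE size $p$ and verify the three conditions. The prefix condition is immediate from $B = w[1..k]$ and $k - 1 < |w|$. For the suffix condition, I would argue that stripping the appended $a$ from the suffix occurrence of $B$ in $w'$ leaves the substring $w[|w| - k + 2 .. |w|]$, which is a proper suffix of $w$ of length $k-1$ and, via the match of $B$ against $w'$, equals $w[1..k-1] = B'$. For the RLE size I would invoke the last-run-at-least-two property: deleting one character from a run of length $\geq 2$ shortens but does not erase that run, so $r(B') = r(B) = p$. The short case $k = 1$ is then immediate, since $B = a$ forces $w[1] = a = w[|w|]$, making $a$ itself a border of $w$ of RLE size $1 = p$ (assuming $|w| \ge 2$, as is the case in the intended applications).

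The main obstacle, and effectively the only non-routine ingredient, is arguing that deleting $B$'s last character preserves the RLE size. This is exactly what the trailing pair of $a$'s in $B$ secures, and it is the only step where the specific form $w \cdot w[|w|]$ (as opposed to an arbitrary one-character extension) is actually exploited; the remainder is a bookkeeping cross-check between the prefix occurrence of $B$ in $w$, which yields the prefix role of $B'$, and the suffix occurrence of $B$ in $w'$, which yields the suffix role of $B'$.
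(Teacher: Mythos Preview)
Your argument is correct. The paper states this as an observation without providing a proof, so there is nothing to compare against; your write-up fills in exactly the routine verification the authors evidently intended the reader to supply. The key step---that the last two characters of any border $B$ of $w\cdot w[|w|]$ are both $a=w[|w|]$, so deleting the final character of $B$ preserves its RLE size---is identified cleanly and is precisely the point where the specific shape $w\cdot w[|w|]$ matters.

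One small remark: you correctly flag that the case $|w|=1$ is degenerate (then $w'=aa$ has the border $a$ of RLE size~$1$, but $w=a$ has no border at all), so the observation as literally stated requires $|w|\ge 2$. This is harmless in the paper's use of the observation (the relevant factors $T[\Beg{i}..q]$ span at least two runs), and you note as much, but strictly speaking it is a gap in the \emph{statement} rather than in your proof.
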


\subsection{RLE shortest border array}
We define the \emph{RLE shortest border array} $\RLESB$ of $T$ as follows:
For each $i$ with $1\le i \le m$,
$\RLESB[i]$ stores the RLE size of the \emph{shortest} border of the prefix $T[1..\End{i}]$ of $T$.
For example, when $T = \mathtt{aaabbbbaaaaaccaaaabbbaa}$, $\RLESB = [1, 0, 1, 0, 1, 2, 1]$.

To design an efficient algorithm for computing $\RLESB$, we give an observation. Let $w$ be a non-unary string, and let $a^s$ and $b^t$ be the first and the last run of $w$, respectively.
Further let $w'$ be the factor of $w$ such that $w = a^sw'b^t$.
If $w$ has a border $B$ with $r(B) \ge 3$, then $B = a^sB'b^t$ holds for non-empty string $B'$ that is a border of $w'$.
Also, the occurrences of $B'$ in $w'$ as a suffix and as a prefix are RLE-bounded~(see Fig.~\ref{fig:innerBorder}).
Namely, each border $B$ of $w$ with $r(B) \ge 3$ can be obtained from some RLE-bounded border $B'$ of $w'$
by prepending $a^s$ and appending $b^t$ to $B'$.
\begin{figure}[tb]
  \centering
  \includegraphics[width=0.8\linewidth]{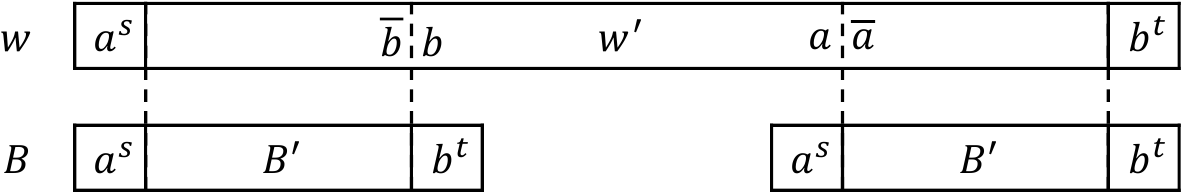} 
  \caption{
    Illustration for border $a^sB'b^t$ of $w$.
    String $B'$ is an RLE-bounded border of $w'$.
  }\label{fig:innerBorder}
\end{figure}

From the above observation, we can compute $\RLESB$ as follows:
for each $i \le m$,
check whether each RLE-bounded border of a string $R_2\cdots R_{i-1}$
can be extended to the left by $R_1$ and to the right by $R_i$.
A na\"{i}ve implementation of this algorithm runs in $O(m^2)$ time because
all RLE-bounded borders of all prefixes of $T' = R_2\cdots R_{m-1}$ can be computed
by considering $\rle(T')$ as a string of length $m-2$ over the alphabet $\Sigma \times \mathbb{N}$ and
constructing the border array of $\rle(T')$.
To speed up, we make use of the following well-known fact about the periodicity of borders:
\begin{lemma}[\cite{KnuthMP77}]\label{lem:periodofborder}
  The set of borders of a string $w$ can be partitioned into $O(\log |w|)$ groups
  according to their smallest periods.
\end{lemma}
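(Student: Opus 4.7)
The plan is to analyze the \emph{border chain} of $w$, namely the sequence $w = B_0 \supsetneq B_1 \supsetneq \cdots \supsetneq B_k = \varepsilon$ where $B_{i+1}$ is defined as the longest border of $B_i$. Since the longest border of any string $B_i$ is itself a border of $w$, and every border of $w$ arises as some $B_i$, it suffices to partition the chain $B_1, \dots, B_{k}$ into $O(\log |w|)$ contiguous blocks in which all members share the same smallest period.

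First I would establish the key periodicity observation: whenever $|B_{i+1}| \geq |B_i|/2$, the quantity $p_i := |B_i| - |B_{i+1}|$ is a period of $B_i$. This is immediate, because the prefix-occurrence and the suffix-occurrence of $B_{i+1}$ inside $B_i$ overlap by at least $|B_{i+1}| - p_i \geq 0$ characters and together cover $B_i$, forcing $B_i[j] = B_i[j + p_i]$ for every valid $j$. Moreover, in this regime $B_i$ and $B_{i+1}$ share the same smallest period: any smaller period $q$ of $B_{i+1}$ would, by Fine and Wilf applied on the overlap of $B_{i+1}$ with its shifted copy inside $B_i$, produce a period of $B_i$ strictly smaller than the smallest period of $B_i$, a contradiction.

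Next I would group the chain into maximal runs of consecutive indices in which $|B_{i+1}| \geq |B_i|/2$. By the observation above and a short induction along the run, all borders in a single run share the same smallest period. At every run boundary, the length drops by strictly more than a factor of two, so starting from $|B_0| = |w|$ and keeping integer lengths, there can be at most $O(\log |w|)$ runs in the entire chain. This immediately yields the claimed partition of the border set into $O(\log |w|)$ groups according to smallest periods.

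The main obstacle is the periodicity argument inside a block: although it is straightforward that the common difference $p_i$ is \emph{a} period, upgrading this to an equality of \emph{smallest} periods along the whole block requires careful invocation of Fine and Wilf to rule out smaller candidate periods, together with an induction that propagates this equality from consecutive pairs $(B_i, B_{i+1})$ to all members of the block simultaneously. Once this is done, the counting step is mechanical.
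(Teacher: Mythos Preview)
The paper does not prove this lemma; it is quoted from~\cite{KnuthMP77} without argument, so there is no in-paper proof to compare against. Your overall strategy---walk the border chain and count halvings---is the standard one, but the central periodicity claim is false: it is \emph{not} true that $|B_{i+1}|\ge |B_i|/2$ forces $B_i$ and $B_{i+1}$ to share the same smallest period, even under strict inequality. Take $B_i=\mathtt{abaababaaba}$ (length $11$): its longest border is $B_{i+1}=\mathtt{abaaba}$ of length $6>11/2$, so the smallest period of $B_i$ is $5$, whereas the smallest period of $B_{i+1}$ is $3$. Your Fine--Wilf step breaks here because the overlap of the prefix- and suffix-occurrences of $B_{i+1}$ inside $B_i$ has length $2|B_{i+1}|-|B_i|=1<3$; the smaller period $q=3$ of $B_{i+1}$ therefore does not propagate to $B_i$, and no contradiction arises. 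Consequently your ``maximal runs'' need not consist of borders with a common smallest period, and the final counting step does not go through.

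The plan is salvageable with a sharper statement. The overlap argument does yield the following: if $|B_{i+1}|\ge |B_i|/2$ and the smallest periods satisfy $p_{i+1}<p_i$, then necessarily $2|B_{i+1}|-|B_i|<p_{i+1}$ (otherwise $p_{i+1}$ would extend to all of $B_i$), i.e.\ $|B_{i+1}|<p_i+p_{i+1}$, whence $|B_{i+2}|=|B_{i+1}|-p_{i+1}<p_i\le |B_i|/2$. And if $|B_{i+1}|<|B_i|/2$ then $|B_{i+2}|<|B_i|/2$ trivially. Thus at every index $i$ where the smallest period changes one has $|B_{i+2}|<|B_i|/2$, and telescoping over the change points bounds the number of distinct smallest periods along the chain by $O(\log|w|)$. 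That replacement lemma is the missing ingredient in your sketch.
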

Within such a group of borders,
the characters that follow prefix-occurrences of all borders except the longest one must be identical
due to periodicity.
Thus, at most two distinct characters can follow prefix-occurrences of borders within the group.
The same holds for the number of distinct characters preceding suffix-occurrences of borders within a group.
Finally, by utilizing the border array and the border-group array of $\rle(T')$,
we can compute $\RLESB$ in $O(m\log m)$ time:
\begin{lemma}\label{lem:sbord}
  Given $\rle(T)$, the RLE shortest border array $\RLESB$ of $T$ can be computed in $O(m\log m)$ time.
\end{lemma}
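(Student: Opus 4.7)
The plan is to work entirely on $\rle(T')$ where $T'=R_2\cdots R_{m-1}$, viewed as a string of length $m-2$ over the enlarged alphabet $\Sigma\times\mathbb{N}$ (one symbol per run), and to split the computation of $\RLESB[i]$ according to the RLE size of the shortest border of $w=T[1..\End{i}]$.

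Borders of RLE size $1$ or $2$ I dispose of directly with $O(1)$ case analysis per $i$: a unary border exists iff $R_1$ and $R_i$ share the same character, and a size-$2$ border $a^jb^k$ is detected by comparing the character-exponent pairs of $R_1,R_2$ against those of $R_{i-1},R_i$. After precomputing run characters and exponents, these short-border cases contribute $O(m)$ total time.

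For borders of RLE size at least $3$, the observation preceding the lemma gives that each such border of $w$ has the form $R_1\cdot B'\cdot R_i$, where $B'$ is an RLE-bounded border of $R_2\cdots R_{i-1}$; the latter corresponds bijectively to a border of the length-$(i-2)$ prefix of $\rle(T')$ under the alphabet expansion. I precompute once, in $O(m)$ total time, the border array $\Bord$ and the border-group array $\BG$ of $\rle(T')$ using the linear-time constructions of~\cite{KnuthMP77,MitaniMSH24}. For each $i$, by Lemma~\ref{lem:periodofborder} the borders of the relevant prefix partition into $O(\log m)$ periodicity groups, and the group representatives are reached by chasing $\BG$-pointers from the current longest border in $O(\log m)$ steps.

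Inside a fixed periodicity group all non-longest members share a single smallest period, so at most two symbols can follow prefix-occurrences and at most two can precede suffix-occurrences. Testing whether $R_1\cdot B'\cdot R_i$ is simultaneously a prefix and a suffix of $w$ then reduces to matching the adjacent runs of $B'$ against $R_1$ on the left and $R_i$ on the right; by the two-character dichotomy on each side, only $O(1)$ candidate borders per group need be examined, so the shortest extendable $B'$ in the group is identified in $O(1)$ time. Taking the minimum across the $O(\log m)$ groups for each $i$ and over all $i\le m$ yields the claimed $O(m\log m)$ bound. The main obstacle I anticipate is making this per-group test genuinely constant-time: I need to pin down, within an arithmetic-progression chain of border lengths sharing one smallest period, exactly which few length classes carry the distinct adjacent-run signatures, and show that each is reachable by a single $\Bord$-jump from the $\BG$-supplied representative. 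Once this structural characterisation is in place, the remainder of the argument is routine bookkeeping on the two precomputed arrays.
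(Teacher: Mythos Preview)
Your proposal is correct and follows essentially the same route as the paper: build $\Bord$ and $\BG$ for $\rle(T')$ over $\Sigma\times\mathbb{N}$, then for each prefix walk the $O(\log m)$ periodicity groups, spending $O(1)$ time per group via the two-symbol dichotomy on the run adjacent to the prefix- and suffix-occurrence. The paper's version checks only the longest and shortest border in each group (which together expose all four adjacent-run values needed to decide every member), and it leaves the RLE-size-$\le 2$ cases implicit; your explicit treatment of those and your anticipated ``which representatives to probe'' issue are exactly the details the paper elides, and they resolve just as you sketch.
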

\begin{proof}
  Let $T' = R_2 \cdots R_{m-1}$.
  We first construct the border array $\Bord_{\rle(T')}$ and the border-group array $\BG_{\rle(T')}$ of $\rle(T')$,
  considering $\rle(T')$ as a string of length $m-2$ over $\Sigma\times \mathbb{N}$.
  For each $j$ with $1 \le j \le m-2$,
  we scan the borders of $\rle(T')[1.. j] = \rle(T)[2.. j+1]$ in decreasing order of their lengths, skipping some of them and processing the rest as follows:
  We check whether the current border $B$ of $\rle(T')$ can be extended to the left by $R_1$ and to the right by $R_{j+2}$,
  by examining the runs immediately after the prefix-occurrence and before the suffix-occurrence of $B$ in $\rle(T')[1.. j]$.
If the group to which $B$ belongs has at least two borders,
we perform the same procedure for the shortest border in the group.
  We then find the next group using the border-group array $\BG_{\rle(T')}$,
  update $B$ to the longest border in that group, and repeat the above procedure.
  The time required to process each group is $O(1)$, and there are $O(\log m)$ groups for each prefix of $\rle(T')$ by Lemma~\ref{lem:periodofborder}.
  Therefore, the total running time is $O(m\log m)$.
\end{proof}

\subsection{Some functions for RLE strings}

Given $\rle(T)$, we construct array $\mathsf{ExpSum}$ of size $m$ such that
$\mathsf{ExpSum}[j] = \sum_{k = 1}^{j}\Exp{k}$ for each $1 \le j \le m$.
Then, given a text position $i$ with $1 \le i \le n$,
we can compute the run to which the $i$th character $T[i]$ belongs
in $O(\log m)$ time by performing binary search on $\mathsf{ExpSum}$.
Tamakoshi et al.~\cite{TamakoshiGIBT15} proposed an $O(m)$-space data structure based on RLE,
called a \emph{truncated RLE suffix array} (tRLESA).
They showed that tRLESA enhanced with some additional information of size $O(m)$
supports several standard string queries, such as pattern matching.
By applying tRLESA and related data structures in conjunction with RMQ and/or RmQ~(Lemma~\ref{lem:rmq}),
several additional queries can be efficiently supported, as detailed below.

For positive integers $i, p, j, q$ satisfying $i\le m$, $p \le \Exp{i}$, $j \le m$, and $q \le \Exp{j}$,
let $\RLElcp{i}{p}{j}{q}$ denote the length of the longest common prefix of $T[\Beg{i}+p-1..n]$ and $T[\Beg{j}+q-1..n]$.
\begin{lemma}\label{lem:rlefuncA}
  After $O(m\log m)$-time and $O(m)$-space preprocessing for $\rle(T)$,
  the value $\RLElcp{i}{p}{j}{q}$ can be computed in $O(1)$ time
  for given integers $i, p, j$, and $q$.
\end{lemma}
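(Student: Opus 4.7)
The plan is to reduce an arbitrary $\RLElcp{i}{p}{j}{q}$ query to at most one LCP query between two RLE-bounded suffixes of $T$, which can be answered in $O(1)$ time via Tamakoshi et al.'s $\tRLESA$ enhanced with an RmQ (Lemma~\ref{lem:rmq}) over its accompanying RLE-LCP array $\tRLELCP$. I assume the $O(m)$-space structures $\tRLESA$, $\tRLELCP$, $\tRLEISA$, $\mathsf{ExpSum}$, and an RmQ over $\tRLELCP$ are built in $O(m \log m)$ time during preprocessing.

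To answer a query, I would first resolve the contribution of the two current runs $R_i$ and $R_j$. Let $c$ and $c'$ be the characters of $R_i$ and $R_j$. If $c \ne c'$, then $\RLElcp{i}{p}{j}{q} = 0$. Otherwise, set $e_i = \Exp{i}-p+1$ and $e_j = \Exp{j}-q+1$, the numbers of remaining $c$'s in the two current runs. If $e_i \ne e_j$, then the shorter suffix first exits its current run and encounters a character distinct from $c$ (by the maximality of RLE runs), while the other suffix still reads $c$; hence $\RLElcp{i}{p}{j}{q} = \min(e_i, e_j)$ in this sub-case.

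The remaining sub-case is $e_i = e_j$, in which both suffixes exhaust their current runs simultaneously and (assuming $i < m$ and $j < m$; otherwise we stop) continue from the run boundaries $\Beg{i+1}$ and $\Beg{j+1}$. The tail LCP equals $\lcp(T[\Beg{i+1}.. n], T[\Beg{j+1}.. n])$; to compute it in $O(1)$, I would issue an RmQ on $\tRLELCP$ between the $\tRLEISA$-positions of these two RLE-bounded suffixes to obtain the number $\ell$ of runs that match as an RLE prefix. The character-length of these $\ell$ matched runs is $\mathsf{ExpSum}[i+\ell] - \mathsf{ExpSum}[i]$, evaluable in $O(1)$. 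Then $R_{i+1+\ell}$ and $R_{j+1+\ell}$ necessarily differ: if their characters differ, the tail LCP terminates there; if the characters agree but exponents differ, we add $\min(\Exp{i+1+\ell}, \Exp{j+1+\ell})$, again justified by run maximality. The final answer is $e_i$ plus this tail LCP.

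The main obstacle, I expect, is the careful case analysis around run boundaries and edge cases ($i = m$, $j = m$, or one of the runs being exactly consumed). The correctness hinges on the fact that consecutive RLE runs have distinct characters, which ensures that any "partial-run remainder" cleanly isolates the LCP inside the two current runs from the LCP beyond them, and that RLE-bounded LCPs translate directly from a single $\tRLELCP$-based RmQ together with an $\mathsf{ExpSum}$ difference.
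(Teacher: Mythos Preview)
Your proposal is correct and takes essentially the same approach as the paper: peel off the contribution of the two current runs by comparing $\Exp{i}-p+1$ with $\Exp{j}-q+1$, and when they coincide reduce to a single RmQ over $\tRLELCP$ between the $\tRLEISA$-ranks of the two RLE-aligned suffixes. The one minor difference is that the paper defines $\tRLELCP$ to store \emph{character-length} lcp values between the suffixes $T[\End{x}..n]$, so the RmQ already returns the tail answer in characters and your $\mathsf{ExpSum}$ conversion plus last-partial-run handling become unnecessary; conversely, your explicit check $c\ne c'$ covers a case the paper's appendix proof glosses over.
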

For positive integers $x, y, h, \ell$ with $h < x \le y \le m$, we define
\[
  \LongestPref{x}{y}{h}{\ell} =
\begin{cases}
-1 \hspace{30pt}\text{if } \lcp(T[\End{h}..n], T[\End{z}.. \End{y}]) > \ell \text{ for all } x \le z \le y,\\
\arg\max_{z: x \le z \le y}\{\lcp(T[\End{h}..n], T[\End{z}.. \End{y}]) \le \ell\} \quad \text{otherwise}.
\end{cases} 
\]
In words, $\LongestPref{x}{y}{h}{\ell}$ is 
the index $x \le z \le y$ of a run such that
$T[\End{z}.. \End{y}]$ has the longest lcp with $T[\End{h}..n]$
where the lcp length is at most $\ell$, if such $z$ exists.
\begin{lemma}\label{lem:rlefuncB}
  After $O(m\log m)$-time and $O(m)$-space preprocessing for $\rle(T)$ and integers $x, y$ with $1 \le  x \le y \le m$,
  the value $\LongestPref{x}{y}{h}{\ell}$ can be computed in $O(\log m)$ time
  for given integers $h$ and $\ell$ with $h < x$.
\end{lemma}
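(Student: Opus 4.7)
The plan is to build on the truncated RLE suffix array $\tRLESA$, its LCP array $\tRLELCP$, and the inverse $\tRLEISA$ of Tamakoshi et al.~\cite{TamakoshiGIBT15}, together with the $O(1)$-time lcp evaluations granted by Lemma~\ref{lem:rlefuncA}. Abbreviate $L(z) := \RLElcp{h}{\Exp{h}}{z}{\Exp{z}}$, $D(z) := \End{y} - \End{z} + 1$, and $f(z) := \min(L(z), D(z))$, noting that $\lcp(T[\End{h}..n], T[\End{z}..\End{y}]) = f(z)$; write also $r_h := \tRLEISA[h]$ and $R_z := \tRLEISA[z]$. The goal is $\arg\max_{z \in [x,y],\, f(z) \le \ell} f(z)$, or $-1$ when $f(z) > \ell$ for every such $z$.

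During preprocessing (which depends on the fixed $x, y$), I would construct $\tRLESA$, $\tRLELCP$, $\tRLEISA$ in $O(m \log m)$ time and $O(m)$ space, enhance $\tRLELCP$ with an $O(m)$-space RmQ of Lemma~\ref{lem:rmq} plus a ``next index whose $\tRLELCP$-value is at most $\tau$'' structure realised by a Cartesian tree with binary descent, and build a compact two-dimensional data structure (for instance a wavelet matrix) on $\{(z, R_z) : z \in [x,y]\}$ that, in $O(\log m)$ time, supports rectangular queries returning either the predecessor/successor of a rank or the smallest/largest $z$-coordinate in a given rank interval.

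At query time two monotonicities drive the algorithm. First, by the standard suffix-array characterisation of lcp, $L(z)$ is non-increasing as $R_z$ moves away from $r_h$, so $\{R_z : L(z) > \ell\}$ is a single rank interval $[u, v]$ containing $r_h$; both endpoints are obtained in $O(\log m)$ by two descents in the Cartesian tree of $\tRLELCP$. Second, $D(z)$ is strictly decreasing in $z$, so $\{z \in [x,y] : D(z) \le \ell\}$ is a text-suffix $[z^\star, y]$, with $z^\star$ found by one binary search on $\End{\cdot}$. I would then split $[x,y]$ into a \emph{long-tail} range $[x, z^\star - 1]$ (where $D(z) > \ell$, and hence $f(z) \le \ell$ iff $L(z) \le \ell$, with $f(z) = L(z)$ when admissible) and a \emph{short-tail} range $[z^\star, y]$ (where $f(z) \le \ell$ is automatic and $f(z) = \min(L(z), D(z))$).

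In the long-tail, the monotonicity of $L$ in rank distance ensures the argmax is the predecessor of $u$ or the successor of $v$ among the ranks of $z \in [x, z^\star - 1]$, each returned by one $O(\log m)$ query in the 2D structure, then scored by Lemma~\ref{lem:rlefuncA} in $O(1)$. In the short-tail I would resolve the max of $\min(L(z), D(z))$ by a parametric descent along the rank axis of the wavelet matrix starting at $r_h$: at each of the $O(\log m)$ levels the rank range bisects, and I maintain the best candidate so far by combining the monotone decrease of $D$ in text order with the rank-monotonicity of $L$, yielding total cost $O(\log m)$. Comparing the long-tail and short-tail candidates produces $\LongestPref{x}{y}{h}{\ell}$, or $-1$ if both sides are empty. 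The main obstacle is precisely the short-tail case: the max-min between a rank-indexed $L$ and a text-indexed $D$ must be computed in $O(\log m)$ rather than the $O(\log^2 m)$ that a naive binary search on the answer value followed by rectangle-existence queries would incur; everything else reduces to routine one-dimensional searches on $\tRLELCP$ and on the precomputed 2D structure.
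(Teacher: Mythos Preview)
Your long-tail case is fine, but the short-tail case is a genuine gap. There you need $\max_{z\in[z^\star,y]}\min(L(z),D(z))$ with $L$ unimodal in \emph{rank} order and $D$ strictly decreasing in \emph{text} order; on $[z^\star,y]$ these two orders are unrelated, so neither monotonicity constrains the other. The ``parametric descent along the rank axis of the wavelet matrix'' you sketch is not an algorithm: at each level the sequence splits by one bit of the value coordinate, and you give no rule for discarding a child. Descending towards $r_h$ and scoring only the boundary rank of the sibling is wrong---the sibling's optimum need not sit at its rank closest to $r_h$, because that boundary element may have a tiny $D$ while some farther rank has moderate $L$ and large $D$ (e.g.\ boundary $L=10,\,D=1$ versus another $L=5,\,D=5$). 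A clean fallback---binary search on the answer value $v$ combined with a 2D emptiness query for ``$\exists z$ with $L(z)\ge v$ and $D(z)\ge v$''---only yields $O(\log^2 m)$ per query, which you explicitly set out to avoid. You have not supplied the missing idea, and I do not see one within $O(m)$-space preprocessing oblivious to $h$.

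The paper bypasses the whole $\min(L,D)$ decomposition with a doubling trick: it builds $\tRLESA$, $\tRLEISA$, $\tRLELCP$ for $T' = T\,\$\,T[\Beg{x}..\End{y}]$. Every suffix of $T'$ starting in the appended copy is literally $T[\End{z}..\End{y}]$, so $\lcp(T[\End{h}..n],\,T[\End{z}..\End{y}])$ is now a plain suffix--suffix lcp in $T'$, with the truncation at $\End{y}$ already baked in. The query becomes one-dimensional: compute the maximal rank interval $[r_1,r_2]$ around $\tRLEISA[h]$ where the lcp exceeds $\ell$ (binary search plus RmQ on $\tRLELCP$), and on each side of it find the nearest rank whose $\tRLESA$ value falls in the appended copy (binary search plus RMQ on $\tRLESA$); the better of the two is the answer. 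This gives $O(\log m)$ per query after $O(m\log m)$-time, $O(m)$-space preprocessing, and it is precisely the step your approach is missing.
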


 \section{Algorithm for computing longest unbordered factors}\label{sec:algo}

In this section, we prove our main theorem:
\begin{theorem}\label{thm:main}
  Given an RLE encoded string $\rle(T)$ of RLE size $m$,
  we can compute the set of longest unbordered factors of $T$
  in $O(m\sqrt{m}\log^2 m)$ time using $O(m \log^2 m)$ space.
\end{theorem}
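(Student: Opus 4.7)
The plan is to follow the high-level strategy of Gawrychowski et al., adapted to the RLE setting. By Lemma~\ref{lem:unbordered}, every longest unbordered factor is RLE-bounded, so I restrict attention to candidates of the form $T[\Beg{i}..\End{j}]$ with $1 \le i \le j \le m$. I process each ending run index $j$, determine the longest unbordered RLE-bounded factor that ends at $\End{j}$, and keep the maximum over $j$; tracking all maximizers yields the full set of longest unbordered factors. To decide whether such a candidate is unbordered, I apply a threshold $\tau = \Theta(\sqrt{m})$ on the \emph{RLE size} of potential borders, splitting the check into a short-border subproblem (borders of RLE size at most $\tau$) and a long-border subproblem (borders of RLE size greater than $\tau$).

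For the short-border check, I exploit Lemma~\ref{lem:periodofborder}: the set of borders of a candidate splits into $O(\log m)$ periodicity groups, and within each group the elements are determined by a single common period. Combining this grouping with precomputed RLE shortest-border information in the spirit of Lemma~\ref{lem:sbord} and constant-time $\RLElcp$ queries from Lemma~\ref{lem:rlefuncA}, I expect to enumerate and test all candidate borders of RLE size at most $\tau$ in roughly $O(\sqrt{m}\,\mathrm{polylog}\,m)$ time per ending. For the long-border check, any border of RLE size greater than $\tau$ corresponds to a small period of the candidate, so such borders again partition into $O(\log m)$ periodicity classes (Lemma~\ref{lem:periodofborder}), each of which can be summarized by a few $\LongestPref$ queries from Lemma~\ref{lem:rlefuncB}; each query costs $O(\log m)$, and iterating over the $O(\sqrt{m})$ long-border candidates in each class contributes the second $\log m$ factor. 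Summing over the $O(m)$ ending runs yields the claimed $O(m\sqrt{m}\log^2 m)$ time bound, and the $O(m\log^2 m)$ space bound comes from the $O(\log m)$ layers of auxiliary RLE structures required to support these queries.

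The hardest part will be the long-border check carried out in Section~\ref{sec:longbordercheck}. The original Gawrychowski et al. argument relies on character-level periodicity of long borders and single-character shifts, which do not survive RLE compression verbatim: a border that is ``long'' in the RLE sense may correspond to many or few characters depending on the exponents of its runs, and the endpoints of a long border need not align with run boundaries of the ambient candidate. New combinatorial bookkeeping will therefore be needed to argue which $\LongestPref$ queries actually certify the presence of a long border, and why only $O(\sqrt{m})$ long-border witnesses per ending run must be examined explicitly. The candidate enumeration in Section~\ref{sec:candidate} is also nontrivial because naively there are $\Theta(m^2)$ RLE-bounded factors; I anticipate using Observation~\ref{obs:contract} together with the $\RLESB$ array from Lemma~\ref{lem:sbord} to batch candidates sharing their first or last runs, so that the per-ending work stays within $O(\sqrt{m}\log^2 m)$ rather than $\Omega(m)$.
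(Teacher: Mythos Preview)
Your high-level split at the $\sqrt{m}$ threshold matches the paper, but both implementations you sketch have genuine gaps.

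For short borders, appealing to $O(\log m)$ periodicity groups per candidate does not avoid $\Theta(m)$ starting positions $i$ for each ending $j$; nothing in your sketch explains how the per-ending work drops to $O(\sqrt{m}\,\mathrm{polylog}\,m)$. The paper's mechanism is quite different: it iterates over $O(\sqrt{m})$ \emph{blocks} of endings, and for the $k$th block shows (Lemma~\ref{lem:RLEsizep}) that the short borders of $T[\End{i}..\End{y+j-1}]$ coincide with those of a string $F(t^\star,j)$ depending only on the last two blocks, where $t^\star$ is located by two $\mathsf{ridx}$ queries. All short-border information for the block is then packed into $O(\sqrt{m})$ tables $M_\tau$, each realised as $O(\sqrt{m}\log m)$ weighted segments on which one Weighted Lowest Stabbing Query (Lemma~\ref{lem:geo}) returns the best $j$ for a given $i$ in $O(\log m)$ time. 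It is this WLSQ reduction, not periodicity groups of individual candidates, that yields both the time bound and the $O(m\log^2 m)$ space.

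For long borders, your claim that each of the $O(\log m)$ periodicity classes contains only $O(\sqrt{m})$ long-border candidates is false: a class with constant RLE period can contain $\Theta(m)$ long borders, so enumerating classes does not bound the work, and $\mathsf{ridx}$ queries alone cannot certify the \emph{absence} of a long border. The paper instead introduces, for each ending $j$, a specific suffix $S_j$ of $T[1..\End{j}]$ whose RLE pseudo period exceeds $\sqrt{m}/2-1$; it proves that $S_j$ is a suffix of every long border of every candidate ending at $\End{j}$ (Lemma~\ref{lem:Sjissuffix}) and that $S_j$ has only $O(\sqrt{m})$ occurrences in $T$ (Lemma~\ref{lem:occofSj}). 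A two-pointer scan of the sorted candidates against the sorted occurrences of $S_j$, using longest-common-suffix queries, then removes all long-bordered candidates in $O(m\log m)$ time per block. The RLE pseudo period and the $O(\sqrt{m})$ occurrence bound are the crux here, and neither appears in your plan.
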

The high-level strategy of our algorithm, presented in Algorithm~\ref{alg:main}, is essentially the same as that of Gawrychowski et al.~\cite{GawrychowskiKSS15}.
We divide the input string $T$ into $\lceil\sqrt{m}\,\rceil$ blocks $J_1, J_2, \ldots, J_{\lceil\sqrt{m}\,\rceil}$,
where each block $J_k$ is RLE-bounded and has RLE size $\lfloor\sqrt{m}\rfloor$ for every $1 \leq k < \lceil\sqrt{m}\,\rceil$.
Throughout this section, we refer to a border of RLE size at most $\sqrt{m}$ as a \emph{short} border, and otherwise as a \emph{long} border.
Let $\rho_T$ denote the RLE size of a longest unbordered factor of $T$.
Also, we refer to a factor starting at $\Beg{i}$ and ending within the $k$th block $J_k$ as an \emph{$(i, k)$-factor}.
Note that the longest unbordered factor of $T$ must be an $(i, k)$-factor for some $i$ and $k$
since it is RLE-bounded~(see Lemma~\ref{lem:unbordered}).

\begin{algorithm}[tb]
  \caption{Algorithm for computing the longest unbordered factor}
  \label{alg:main}
  \begin{algorithmic}[1]
    \Require String $T$ of RLE size $m$.
    \Ensure The set of longest unbordered factors of $T$.
\State $\LUB \gets \LongSUBf(T)$ \Comment{$\LUB$: set of longest unbordered factors.}
    \State $\ell^\star = \max\{ |x|\,:\, x \in \LUB\}$ \Comment{$\lub$: length of the longest unbordered factor.}
    \State Preprocess for {\RmBordf}
    \For{$k \gets 5$ to $\lceil\sqrt{m}\,\rceil$}
    \State Preprocess for $\Candf_k$
    \State $C \gets \{\varepsilon\}$
    \For{$i \gets 1$ to $(k-4)\lfloor\sqrt{m}\rfloor$}
    \State $C \gets C \cup \{\Candf_k(i)\}$
    \EndFor
    \State $U \gets \RmBordf(k, C)$ \Comment{All strings in $U$ are unbordered.}
    \State $\ell \gets \max\{|u|\,:\,u\in U\}$
    \If{$\ell < \lub$}
    \State \textbf{continue} \Comment{Do nothing and continue to the next stage.}
    \ElsIf{$\ell > \lub$}
    \State $\LUB \gets \emptyset$ \Comment{Clear the current tentative solutions.}
    \State $\lub \gets \ell$
    \EndIf
    \State $\LUB \gets \LUB \cup \{u \in U\,:\, |u| = \lub\}$
    \EndFor
    \State  \Return $\LUB$
  \end{algorithmic}
\end{algorithm}

Let us look at Algorithm~\ref{alg:main}.
The set $\LUB$ represents the current tentative solution and
the variable $\lub$ represents the length of an element in $\LUB$.
First of all, 
we invoke the subroutine {\LongSUBf},
which outputs the longest unbordered factors of RLE size at most $4\sqrt{m}$,
and tentatively update $\LUB$ and $\ell^\star$~(lines~1--2).
The main part of our algorithm consists of $O(\sqrt{m})$ stages,
corresponding to the outer \textbf{for} loop (lines~4--19).
In each stage, say the $k$th stage, we first compute a set $C$ of \emph{candidates} for
longest unbordered factors that end within $J_k$
by calling the subroutine $\Candf_k$ $O(m)$ times (lines~7--9).
Here, as we will show later in Lemma~\ref{lem:cand},
the subroutine $\Candf_k(i)$ returns one of the following three strings:
(1) the longest unbordered $(i, k)$-factor, if such a factor exists;
(2) the empty string, if all $(i, k)$-factors have short borders; or
(3) an $(i, k)$-factor that has no short border but has a long border, otherwise.
Then, the set $C$ is guaranteed to contain a longest unbordered factor of $T$
if (i) $\rho_T > 4\sqrt{m}$ and (ii) there is a longest unbordered factor ending within $J_k$.
We then eliminate from $C$ all factors that have a long border by calling the subroutine $\RmBordf(k, C)$,
which checks whether each string in $C$ has a long border and removes it if so (line~10).
If any candidates remain, we select the longest ones and update $\lub$ and $\LUB$ if necessary (lines~11--18).
After the outer \textbf{for} loop,
we have the final answer $\LUB$, thus output it.

The correctness of Algorithm~\ref{alg:main} is straightforward from the properties of the three subroutines
{\LongSUBf}, $\Candf_k$, and {\RmBordf}.
In what follows, we describe how to implement the subroutines efficiently.

\subsection{Implementation of \LongSUBf}
To implement $\LongSUBf$, we simply apply Lemma~\ref{lem:sbord} for all
RLE-bounded factors of RLE size $4\sqrt{m}$.
By doing so, we can compute the longest unbordered factors
of RLE size at most $4\sqrt{m}$ in a total of $O(m\sqrt{m}\log m)$ time.

\subsection{Implementation of $\Candf_k$}\label{sec:candidate}

Throughout this subsection, we fix an integer $5 \le k \le \lceil \sqrt m\, \rceil$ arbitrarily.
The definition of function $\Candf_k(i)$ is as follows:
$\Candf_k(i)$ returns the longest $(i, k)$-factor that has no short border, if it exist; or
the empty string, otherwise.
This definition leads to the following properties:
\begin{lemma}\label{lem:cand}
  (1) If there is an unbordered $(i, k)$-factor,
  then $\Candf_k(i)$ returns the longest unbordered $(i, k)$-factor.
  (2) If all $(i, k)$-factors have short borders,
  then $\Candf_k(i)$ returns the empty string.
  (3) Otherwise, 
  $\Candf_k(i)$ returns an $(i, k)$-factor that has no short border
  but has a long border.
\end{lemma}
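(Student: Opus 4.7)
My plan is to dispose of parts (2) and (3) immediately from the definition of $\Candf_k(i)$ and to concentrate the work on part (1). For (2), if every $(i, k)$-factor has a short border then no $(i, k)$-factor satisfies ``has no short border,'' so by definition $\Candf_k(i)$ returns $\varepsilon$. For (3), the hypothesis excludes both (1) and (2): some $(i, k)$-factor has no short border, yet none is unbordered. In that case $\Candf_k(i)$ returns the longest $(i, k)$-factor with no short border, which must be bordered; and because none of its borders can be short, at least one is long.

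The real work is in part (1). Let $u$ denote the longest unbordered $(i, k)$-factor and $f = \Candf_k(i)$. Since $u$ has no border at all, it is a valid candidate, so $|u| \leq |f|$; I aim to show equality. Assume for contradiction $|u| < |f|$, so $u$ is a proper prefix of $f$. Then $f$ must be bordered (otherwise $f$ would be a longer unbordered $(i, k)$-factor than $u$) and, since $f$ has no short border, $f$ has some long border $B$. The core combinatorial step will be the inequality $|B| \leq |f| - |u|$: if instead $|B| > |f| - |u|$, the suffix-occurrence of $B$ in $f$ starts at position $p = |f| - |B| + 1$ with $2 \leq p \leq |u|$ (the lower bound because $B$ is a proper border of $f$), and matching this suffix against the prefix-occurrence $f[1..|B|]$ on the portion that lies inside $u$ yields $u[1..|u|-p+1] = u[p..|u|]$, a non-empty proper border of $u$---contradicting its unborderedness.

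With $|B| \leq |f| - |u|$ established, the suffix-occurrence of $B$ lies entirely in the segment of $f$ after $u$, which in $T$-coordinates sits inside the block $J_k$: both its left endpoint (immediately after the end of $u$, which lies in $J_k$) and its right endpoint (the end of $f$, which also lies in $J_k$) belong to $J_k$. Hence the substring $B$ spans at most the runs that form $J_k$, so its RLE size is at most $\sqrt{m}$; that makes $B$ short, contradicting the assumption that $f$ has no short border. I expect the main obstacle to be the overlap argument that extracts a border of $u$: the indexing must be carried out carefully, and one needs to verify that the resulting length $|u| - p + 1$ lies strictly between $0$ and $|u|$ so that it genuinely refutes the unborderedness of $u$.
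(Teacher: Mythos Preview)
Your proof is correct and follows essentially the same approach as the paper's. Parts (2) and (3) are handled identically (direct from the definition). For part (1), the paper argues the contrapositive of your dichotomy: since $B$ is long, its suffix-occurrence must start before $J_k$ (as $r(J_k)=\sqrt{m}$), hence strictly inside $u$, which forces a border of $u$; your version instead first rules out the overlap with $u$ via the same border-of-$u$ argument and then derives that $B$ sits inside $J_k$ and is therefore short. The underlying combinatorics are the same, and your write-up spells out the overlap step that the paper leaves implicit.
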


Let $J_k.\first$ and $J_k.\last$ be the indices of runs such that
$T[\Beg{J_k.\first}..$\linebreak
$\End{J_k.\last}] = J_k$.
{Let $D_k = J_{k-1}J_k$, $x = J_{k-1}.\first$, $y = J_k.\first$, and $z = J_k.\last$.
Namely, $T[\Beg{x}.. \End{z}] = D_k$ and $T[\Beg{y}.. \End{z}] = J_k$ hold.
}Let $P_i$ be the longest prefix of $T[\End{i}.. \End{z}]$ that occurs in $D_k$.
If $P_i = \varepsilon$, then $\Candf_k(i)$ returns $T[\Beg{i}.. \End{z}]$ since it has no short border.
Otherwise, let $p = r(P_i)$.
If $p = 1$, $\Candf_k(i)$ can be easily computed by comparing the characters of the $i$th run and the $z$th run.
Thus, we assume $p > 1$ in the following.
Let $P_i = aub^{e_1}$ where $a = P_i[1]$, $b^{e_1}$ is the last run of $P_i$, and $u \in \Sigma^\star$ is the rest.
Further let $\Gamma$ be the set of exponents of $b$ following some occurrences of $au$ in $D_k$.
If $\min\Gamma > e_1$, the next character of $aub^{e_1}$ in $D_k$ is always $b$.
Then, any factor starting at $\End{i}$ and ending at $\End{j'}$ for some $y \le j' \le z$
can not have a short border of RLE size exactly $p$
because if such a border exists, the border forms $aub^e$ with $e \le e_1$, contradicts that $\min\Gamma > e_1$.
If $\min\Gamma \le e_1$, we define
$e_2=\max\{\gamma\in\Gamma\mid\gamma \le e_1\}$.
Let $\End{t}$ be the starting position of an occurrence of $aub^{e_2}$ in $D_k$.
{We further define
$F(t, j) = T[\End{t}.. \End{z}]\$T[\Beg{x}.. \End{y+j-1}]$ for $j$ with $1 \le j \le \sqrt{m}$,
where $\$$ is a special character with $\$\not\in\Sigma$.
}The next lemma holds:
\begin{lemma}\label{lem:RLEsizep}
  Assume $\min\Gamma \le e_1$ holds.
{For each $1 \le j \le \sqrt{m}$,
  $T[\End{i}.. \End{y+j-1}]$ has a short border of RLE size $p$ if and only if
  $F(t, j)$ has a short border of RLE size $p$.
}\end{lemma}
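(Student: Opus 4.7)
The plan is to characterize, on each side, exactly when a border of RLE size $p$ exists, and then verify that the two characterizations coincide. First, for $T[\End{i}..\End{y+j-1}]$: since its first $p-1$ runs in the RLE are $a^1, R_{i+1}, \ldots, R_{i+p-2}$ (matching the prefix $au$ of $P_i$) and its $p$-th run is the $b$-run $R_{i+p-1}$ of some exponent $e_q := \Exp{i+p-1} \ge e_1$, any border of RLE size exactly $p$ has the form $aub^{e^*}$ with $1 \le e^* \le e_q$. The suffix condition then forces $\Exp{y+j-1} = e^*$ (otherwise $T[\End{y+j-1}-e^*] = b$, contradicting the last run of $u$ being non-$b$) together with an occurrence of $au$ in $T$ ending at $\End{y+j-2}$ (read as $\End{y-1}$ when $j=1$). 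Analogously, since any border of $F(t,j)$ avoids $\$$ and the choice of $t$ places a $b$-run of exponent exactly $e_2$ right after the occurrence of $au$ at $\End{t}$, a border of $F(t,j)$ of RLE size $p$ has the form $aub^{e^{**}}$ with $1 \le e^{**} \le e_2$, $\Exp{y+j-1} = e^{**}$, and an occurrence of $au$ in $T[\Beg{x}..\End{y+j-2}]$.

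The central technical step is to argue that in the forward direction the $au$-occurrence ending at $\End{y+j-2}$ actually lies inside $D_k$. I would prove this by an RLE counting argument: if the occurrence leaked past $\Beg{x}$, then a suffix of $au$ would equal $T[\Beg{x}..\End{y+j-2}]$ as strings, and hence also as RLE representations. However $T[\Beg{x}..\End{y+j-2}]$ is RLE-bounded and contains all of $J_{k-1}$, so its RLE size is at least $\lfloor\sqrt{m}\rfloor$, while $au$ has only $p-1 \le \lfloor\sqrt{m}\rfloor - 1$ runs, yielding a strict contradiction. Hence the $au$-occurrence lies in $D_k$, and its trailing $b$-run exponent $e^*$ belongs to $\Gamma$.

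Once the $au$-occurrence is known to lie in $D_k$, I would close the forward direction with a short case split showing $e^* \le e_2$: if $e_1 = e_q$, this is immediate from $e^* \le e_q = e_1$; if $e_1 < e_q$, the maximality of $P_i$ inside $D_k$ forces $\max \Gamma \le e_1$, so $e^* \le e_1$ again. Combining $e^* \in \Gamma$ with $e^* \le e_1$ and the definition $e_2 = \max\{\gamma \in \Gamma \mid \gamma \le e_1\}$ gives $e^* \le e_2$, and the same $au$-occurrence together with $\Exp{y+j-1} = e^*$ witnesses a border of $F(t,j)$. Conversely, a border $aub^{e^{**}}$ of $F(t,j)$ satisfies $e^{**} \le e_2 \le e_1 \le e_q$ for free, and its $au$-occurrence inside $T[\Beg{x}..\End{y+j-2}]$ immediately extends (since $\End{i} < \Beg{x}$) to an $au$-occurrence inside $T[\End{i}..\End{y+j-2}]$, producing the desired border of the left-hand factor.

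The hard part is the RLE counting argument for the $D_k$-containment of the $au$-occurrence in the forward direction; the remainder reduces to careful bookkeeping with the inequalities $e^* \in \Gamma$, $e^* \le e_1$, and $e_2 \le e_1 \le e_q$ on the relevant exponents.
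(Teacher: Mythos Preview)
Your argument is correct and follows the same skeleton as the paper's proof: pin down any RLE-size-$p$ border as $aub^{\Exp{j'}}$, then compare $\Exp{j'}$ against $e_1$ and $e_2$ in each direction. You actually make explicit the one step the paper glosses over, namely that the suffix-occurrence of $au$ (and hence of the whole border) must land inside $D_k$; the paper silently uses this when it asserts both $\Exp{j'}\le e_1$ and $\Exp{j'}\in\Gamma$, so your ``hard part'' is a genuine gap-fill rather than extra work. The one place you take a longer path than necessary is the case split on $e_1=e_q$ versus $e_1<e_q$: once containment in $D_k$ is established, the border $aub^{\Exp{j'}}$ itself occurs in $D_k$ and is a prefix of $T[\End{i}..\End{z}]$, so maximality of $P_i$ gives $|aub^{\Exp{j'}}|\le|P_i|$ and hence $\Exp{j'}\le e_1$ directly, with no need to examine $e_q$ or $\max\Gamma$.
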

\begin{proof}
  Let $j' = y+j-1$.
  ($\Rightarrow$)
  If $T[\End{i}.. \End{j'}]$ has a short border of RLE size $p \le \sqrt{m}$,
  the border forms $aub^{\Exp{j'}}$ and it holds that $\Exp{j'} \le e_1$.
  Also, $\Exp{j'} \le e_2$ holds since $e_2 < \Exp{j'} \le e_1$ contradicts the definition of $e_2$.
  Therefore, $F(t, j)$ has border $aub^{\Exp{j'}}$ since $F(t, j)$ starts with $aub^{e_2}$.
($\Leftarrow$)
  If $F(t, j)$ has a short border of RLE size $p \le \sqrt{m}$, 
  the border forms $aub^{\Exp{j'}}$ and it holds that $\Exp{j'} \le e_2$.
  Also, $\Exp{j'} \le e_2 \le e_1$ holds by the definition of $e_2$.
  Therefore, $T[\End{i}.. \End{j'}]$ has border $aub^{\Exp{j'}}$ since $T[\End{i}.. \End{j'}]$ starts with $aub^{e_1}$.
\end{proof}

Let $t^\star = t$ if $\min\Gamma \le e_1$;
otherwise, let $t^\star$ be the starting position of an occurrence of $aub$ in $D_k$.
{Regardless of the value of $\min\Gamma$,
for each $1 \le j \le \sqrt{m}$,
the set of short borders of $T[\End{i}.. \End{y+j-1}]$ of RLE size $q$
is equivalent to that of $F(t^\star, j)$ for any $q < p$,
since such borders are prefixes of their common prefix $au$.
Also, the RLE size of any short border of $T[\End{i}.. \End{y+j-1}]$ is
upper bounded by $p$ from the definition of $P_i$.
To summarize, 
the set of short borders of $T[\End{i}.. \End{y+j-1}]$ is equivalent to
the set of short borders of $F(t^\star, j)$ of RLE size at most $p-1$ if $\min\Gamma > e_1$;
otherwise, it is equivalent to
the set of short borders of $F(t^\star, j)$ of RLE size at most $p$ by Lemma~\ref{lem:RLEsizep}.
}

Next, for $1 \le j \le \sqrt{m}$, let us consider the short borders of $T[\Beg{i}.. \End{j'}]$ where $j' = y + j - 1$.
A short border of $T[\Beg{i}.. \End{j'}]$ can be obtained
by extending a short border of $T[\End{i}.. \End{j'}]$ to the left by $\Exp{i}-1$ characters.
Consider all the short borders $B_1, B_2, \ldots, B_g$ of $T[\End{i}.. \End{j'}]$,
which satisfy that  $r(B_s) \le \min\{p, \sqrt{m}\}$ for all $s$.
Note that all such borders are also borders of $F(t^\star, j)$ as discussed above.
Let $e_s$ be the exponent of the first run of the suffix-occurrence of $B_s$ for each $1 \le s \le g$.
Let $\mathcal{E}_{j'}^p$ be the set of such $e_s$ for all $B_1, B_2, \ldots, B_g$.
If $e_s < \Exp{i}$ for all $e_s \in \mathcal{E}_{j'}^p$, i.e., $\max\mathcal{E}_{j'}^p < \Exp{i}$,
then $T[\Beg{i}.. \End{j'}]$ cannot have a short border.
Conversely, if $\max\mathcal{E}_{j'}^p \ge \Exp{i}$,
then $T[\Beg{i}.. \End{j'}]$ has a short border.

Based on the observations above, we design an algorithm for computing $\Candf_k(i)$.

\paragraph*{\bf Preprocessing.}
We construct a data structure for $\LongestPref{x}{z}{\cdot}{\cdot}$ by using Lemma~\ref{lem:rlefuncB}.
Next, let us \emph{conceptually} consider a $\sqrt{m}\times\sqrt{m}$ table $M_\tau$ defined as follows:
{$M_\tau[r][j] = \infty$ for all $r$
  if the first and the last characters of $F(\tau, j)$ are the same;
  otherwise,
}$M_\tau[r][j]$ stores the maximum exponent among the first runs of the suffix-occurrences of
those borders of $F(\tau, j)$ whose RLE size is at most $r$;
if there is no such a border, then $M_\tau[r][j] = 0$~(see also the left part of Fig.~\ref{fig:segments}).
Note that we do not explicitly construct such $\sqrt{m}\times\sqrt{m}$ tables.
The details of their implementation will be described later.

\paragraph*{\bf Query.}
Given a position $i$,
we first compute $\alpha = \LongestPref{x}{z}{i}{\infty}$,
which satisfies that the lcp of $T[\End{i}..n]$ and $T[\End{\alpha}..\End{z}]$ equals $P_i$.
Also, we compute $e_1 = |P_i| - |au|$ and $p = r(P_i)$.
If $\Exp{\alpha+p-1} = e_1$, then we set $t = \alpha$ since $e_2 = e_1$.
Otherwise, we compute $\beta = \LongestPref{x}{z}{\alpha}{|P_i|}$.
Next, we compute $L = \lcp(T[\End{i}..n], T[\End{\beta}..\End{z}])$ by calling $\RLElcp{i}{\Exp{i}}{\beta}{\Exp{\beta}}$.
If $|L| \le |au|$, then $\min\Gamma > e_1$ holds and thus we set $t = \alpha$.
Otherwise, we set $t = \beta$ since $e_2 = |L| - |au| \le e_1$.
Furthermore, we set $p' = \min\{p-1, \sqrt{m}\}$ if $\min\Gamma > e_1$; otherwise, set $p' = \min\{p, \sqrt{m}\}$.
Next, we find the largest $j^\star$ such that $M_t[p'][j^\star] < \Exp{i}$.
If there is no such $j^\star$, $\Candf_k(i)$ returns $\varepsilon$.
Otherwise, it returns $T[\Beg{i}.. \End{y+j^\star-1}]$ since
$T[\Beg{i}.. \End{\iota}]$ has a short border for all $\iota$ with $y+j^\star-1 < \iota \le z$,
and hence, by Observation~\ref{obs:contract},
$T[\Beg{i}.. q]$ has a short border for all text positions $q$ with $\Beg{y+j^\star} \le  q \le \End{z}$.

\subsubsection{Implementing table $M_\tau$.}
The remaining task is to efficiently implement $M_\tau$ so that $j^\star$ can be found quickly.
For each column of a table $M_\tau$, say $j$th column, the values are non-decreasing  by definition.
Also, there are only $O(\log m)$ distinct values due to periodicity of borders of $T[\End{\tau}+1.. \End{z}]\$T[\Beg{x}.. \End{y+j-2}]$.
Namely, there are $O(\log m)$ runs of integers in the column.
We define a (vertical) line segment that corresponds to each run,
and assign the integer representing a run to each segment as its weight~(see Fig.~\ref{fig:segments}).
\begin{figure}[tb]
  \centering
  \includegraphics[width=\linewidth]{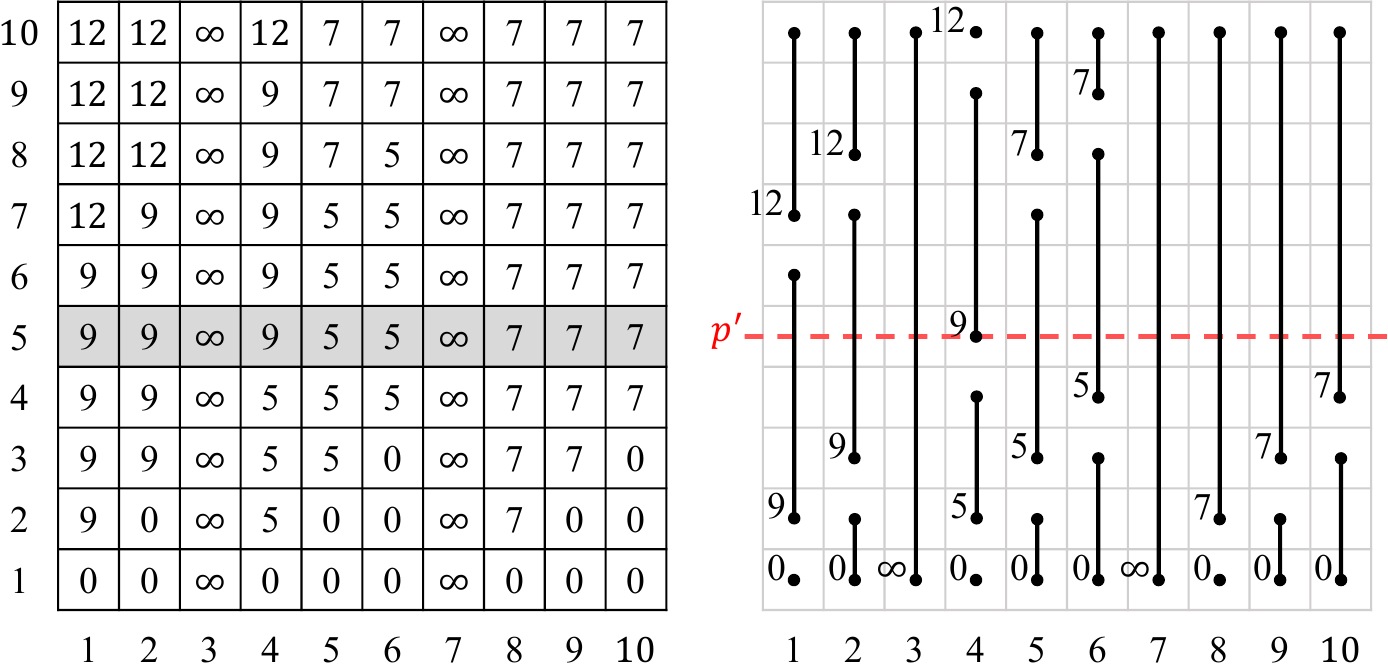} 
  \caption{Left: An example of table $M_\tau$.
  Each column is non-decreasing from bottom to top.
  The $\infty$ symbols in the $3$rd and $7$th columns indicate that the first and last characters of $F(\tau, 3)$ and $F(\tau, 7)$ are the same, respectively.
  Right: Line segments corresponding to the runs in all columns of $M_\tau$.
  If $p' = 5$ and $e = 7$,
  the largest $j^\star$ such that $M_\tau[p'][j^\star] < e$
  is $6$.
}\label{fig:segments}
\end{figure}
Then we have $O(\sqrt{m}\log m)$ weighted line segments for $M_\tau$ in total.
By using such weighted line segments,
we can compute, for given $p'$ and $e$, 
the largest $j^\star$ such that $M_\tau[p'][j^\star] < e$ as follows:
find the rightmost line segment that intersects the horizontal line $r = p'$ and has weight less than $e$, then $j^\star$ is the j-coordinate of the line segment~(again, see Fig.~\ref{fig:segments}).

The set $S_\tau$ of such line segments can be computed in $O(\sqrt{m}\log m \log \log m)$ time
by adapting the idea of the construction algorithm for the RLE shortest border array~(Lemma~\ref{lem:sbord}) as follows:
For each prefix of $F(\tau, z)$,
we enumerate all $O(\log m)$ possible exponents of the first runs of the suffix-occurrences of borders,
and for each exponent $E$,
compute the minimum RLE size of a border whose first run has exponent $E$.
By sorting these values by their RLE size in ascending order
and then scanning their exponents, we can obtain the desired segments.

For each $x \le \tau \le z$,
we construct a data structure of the \emph{weighted lowest stabbing query (WLSQ)} on $S_\tau$, where $S_\tau$ is rotated 90 degrees to the right, as defined below:
\begin{definition}[Weighted Lowest Stabbing Query; WLSQ]\label{def:geo}
  A set $S$ of weighted horizontal segments over $N\times N$ grid are given for preprocessing.
  The query is, given integers $v$, $w_1$, and $w_2$,
  to report the lowest segment $s$ such that
  $s$ is stabbed by vertical line $x = v$
  and the weight of $s$ is between $w_1$ and $w_2$.
\end{definition}
Given $i$, we can obtain $j^\star$ by answering WLSQ on $S_t$ for $v = p', w_1 = 0$, and $w_2 = \Exp{i}-1$.

Very recently, Akram and Mieno proposed an algorithm for a problem called
the \emph{2D top-$k$ stabbing query with weight constraint}~(Definition 8 in~\cite{Akram}),
which subsumes WLSQ as a special case.
Although not stated explicitly, their data structure can be constructed in $O(|S| \log^2 |S|)$ time.
We propose a simpler data structure specialized for WLSQ
and show that it can be constructed slightly faster, in $O(|S| \log |S|)$ time:
\begin{lemma}\label{lem:geo}
  A set $S$ of segments is given as the input of WLSQ.
  We can build a data structure of size $O(|S| \log |S|)$
  in $O(|S| \log |S|)$ time
  that can answer any WLSQ in $O(\log |S|)$ time.
\end{lemma}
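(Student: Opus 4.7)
The plan is to reduce the two-dimensional structure of WLSQ to a one-dimensional persistent problem by sweeping along the $x$-axis and maintaining a persistent segment tree over a weight-sorted rank array, then answer each query by a range-minimum search in the appropriate version.

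First, I would assign each segment in $S$ a unique rank in $\{1, \ldots, |S|\}$ by sorting lexicographically by (weight, $y$-coordinate, segment id); then the ranks corresponding to any weight interval $[w_1, w_2]$ form a contiguous sub-range, locatable by two binary searches in $O(\log |S|)$ time. Next, I would generate $2|S|$ sweep events: at the left $x$-endpoint $x_1$ of each segment, assign its $y$-value to its rank; at $x_2 + 1$, reset that rank to $+\infty$. Sorting these events by $x$-coordinate costs $O(|S| \log |S|)$. Sweeping left to right, I would apply every event to a persistent segment tree whose leaves are indexed by rank and whose internal nodes cache the minimum $y$-value in the subtree. Each update touches $O(\log |S|)$ nodes, so all $O(|S|)$ versions together occupy $O(|S| \log |S|)$ space and are built in $O(|S| \log |S|)$ time.

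To answer a query $(v, w_1, w_2)$, I would binary-search among the event $x$-coordinates for the tree version corresponding to the sweep line $x = v$, translate $[w_1, w_2]$ to a contiguous rank range by two binary searches as above, and finally perform a standard range-min query on the selected persistent segment tree version. Each of these three steps runs in $O(\log |S|)$ time, so the query time is $O(\log |S|)$; the returned minimum $y$-value, together with the stored rank-to-segment pointer, identifies the lowest qualifying segment.

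The main obstacle I anticipate is the multiplicity issue: several simultaneously active segments may share the same weight, which rules out keying the segment tree by raw weight alone. This is precisely what motivates the lexicographic $(\text{weight}, y, \text{id})$ keying, which flattens the multiset into distinct ranks so that a plain persistent segment tree over a 1D rank array suffices and each insertion/deletion is a clean point update. A minor care point is handling coincident event $x$-coordinates, which I would resolve by processing insertions before deletions at each shared $x$ so that the set of alive segments at version $v$ coincides with the set stabbed by $x = v$ under the closed-interval convention.
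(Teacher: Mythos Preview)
Your proposal is correct and achieves the stated bounds, but it follows a genuinely different route from the paper's proof.

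The paper builds a \emph{static} interval (segment) tree on the $x$-axis: each horizontal segment is stored in the $O(\log|S|)$ canonical nodes covering its $x$-extent, and within every node the segments are kept sorted by weight together with an RmQ structure over their $y$-coordinates. Fractional cascading along the root-to-leaf path for $x=v$ then lets the weight range $[w_1,w_2]$ be located in $O(1)$ per node, and the RmQ at each node yields the local minimum~$y$; taking the overall minimum over the path gives the answer. Your approach instead \emph{sweeps} the $x$-axis, maintaining a persistent segment tree keyed by a weight-based rank, so that the version at $x=v$ already encodes exactly the segments stabbed by that line; the query becomes a single range-minimum over a contiguous rank interval in that version. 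Both designs give $O(|S|\log|S|)$ preprocessing and space with $O(\log|S|)$ queries. The paper's construction stays within purely static ingredients (interval tree, RmQ, fractional cascading) and avoids persistence; your construction avoids fractional cascading entirely and is arguably simpler to implement and to extend to offline-dynamic variants. One small remark: your final note about ``insertions before deletions'' is unnecessary, since deletions are already scheduled at $x_2+1$; what matters is only that the binary search returns the version after \emph{all} events with coordinate at most $v$ have been applied.
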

Thus, we can implement $\Candf_k$ so that $\Candf_k(i)$ can be computed in $O(\log m)$ time for each $i$
after $O(\sum_{x \le \tau \le z} |S_\tau|\log |S_\tau|)=O(m\log^2 m)$ time and space preprocessing.

\subsection{Implementation of \RmBordf} \label{sec:longbordercheck}

We define a new notion called the \emph{RLE pseudo period} as follows:
for a string $w$ of RLE size $r$, the RLE pseudo period $\pp(w)$ of $w$
is the value $r  - b$ where $b$ is the RLE size of the border of $w$. Note that the RLE size $p$ of the prefix (or suffix) of $w$ whose length equals the period of $w$
is not always equal to $\pp(w)$,
but it holds that $p-1 \le \pp(w) \le p$.
\begin{example}
  Consider string $w = \mathtt{abaababa}$ of RLE size $r = 7$.
  For this string, $\pp(w) = 4$ holds since the border of $w$ is $\mathtt{aba}$ of RLE size $b = 3$.
  The period of $w$ is $5$.
  On the one hand, the RLE size $p_1$ of the length-$5$ prefix $\mathtt{abaab}$ of $w$ is $4$, and then $\pp(w) = p_1$ holds.
  On the other hand, the RLE size $p_2$ of the length-$5$ suffix $\mathtt{ababa}$ of $w$ is $5$, and then $\pp(w) = p_2-1$ holds.
\end{example}
For each $j > \sqrt{m}$, let $S_j$ be the longer string within
(1) the shortest suffix of $\$T[1..\End{j}]$ whose RLE pseudo period is greater than $\sqrt{m}/2-1$, and
(2) $T[\Beg{j-\sqrt{m}+1}.. \End{j}]$.
Note that $S_j$ is well-defined since the RLE pseudo period of $\$T[1..\End{j}]$ must be greater than $\sqrt{m}/2-1$ for $j > \sqrt{m}$.

\begin{lemma}\label{lem:Sjissuffix}
  If the shortest border of $T[\Beg{i}.. \End{j}]$ is a long border,
  then $S_j$ is a suffix of each border of $T[\Beg{i}.. \End{j}]$.
\end{lemma}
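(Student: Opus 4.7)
My plan is to fix an arbitrary border $B$ of $T[\Beg{i}..\End{j}]$, view both $B$ and $S_j$ as suffixes of $\$T[1..\End{j}]$ ending at position $\End{j}$, and reduce the goal to the length inequality $|B|\ge|S_j|$. Since $S_j$ is defined as the longer of its two candidates, this splits into two independent inequalities, one against each candidate.

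The first inequality concerns the candidate $T[\Beg{j-\sqrt m+1}..\End{j}]$. The hypothesis that the shortest border of $T[\Beg{i}..\End{j}]$ is long gives $r(B)>\sqrt m$. Since $B$ ends at $\End{j}$ and has more than $\sqrt m$ runs, its leftmost run is $R_\ell$ with $\ell\le j-\sqrt m$, so $B$ starts no later than $\End{\ell}<\Beg{j-\sqrt m+1}$. Hence $T[\Beg{j-\sqrt m+1}..\End{j}]$ is a strict suffix of $B$.

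The second inequality concerns the shortest suffix $S^\star$ of $\$T[1..\End{j}]$ with $\pp(S^\star)>\sqrt m/2-1$. I would show that $B$ itself satisfies this pseudo-period condition; minimality of $S^\star$ then forces $|B|\ge|S^\star|$. If $B$ is unbordered, $\pp(B)=r(B)>\sqrt m$ and we are done. Otherwise, let $P$ be the prefix of $B$ whose length equals the smallest period of $B$; the inequality $r(P)-1\le\pp(B)\le r(P)$ noted after the definition of $\pp$ lets me translate any bound on $\pp(B)$ into a bound on $r(P)$.

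The main obstacle, and the only substantive combinatorial step, is ruling out $\pp(B)\le\sqrt m/2-1$ in the bordered case. Assume it for contradiction; then $r(P)\le\sqrt m/2$. Within the periodicity group of $B$ whose smallest period is $|P|$, the shortest member has length strictly less than $2|P|$ and therefore has the form $PQ$ with $Q$ a proper prefix of $P$, so its RLE size is at most $r(P)+r(Q)\le 2\,r(P)\le\sqrt m$. Because a border of a border is a border, this string is a border of $T[\Beg{i}..\End{j}]$ of RLE size at most $\sqrt m$, contradicting the hypothesis that the shortest border is long. This establishes $\pp(B)>\sqrt m/2-1$, and combining with the previous paragraph finishes the proof.
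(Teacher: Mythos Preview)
Your proof has a gap in the bordered case. The claim that the shortest member of the periodicity group containing $B$ has length strictly less than $2|P|$ is not true when groups are taken, as in Lemma~\ref{lem:periodofborder}, to consist of borders sharing the same smallest period. For instance, take $B=\mathtt{abaababaaba}$ as a border of $T[\Beg{i}..\End{j}]$: it has smallest period $|P|=5$, but its own borders $\mathtt{a},\mathtt{aba},\mathtt{abaaba}$ have smallest periods $1,2,3$ respectively. Thus $B$ can be the \emph{only} border of $T[\Beg{i}..\End{j}]$ with smallest period $5$, and then the shortest member of its group has length $11>2|P|$. Moreover $r(B)=9>8=2\,r(P)$, so even the RLE bound you aim for fails on this ``shortest member''.

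The paper avoids the bordered case entirely by taking $B$ to be the \emph{shortest} border of $T[\Beg{i}..\End{j}]$ from the outset. That $B$ is then unbordered, so $\pp(B)=r(B)>\sqrt m$, which immediately yields both $|B|\ge|S^\star|$ and $|B|>|T[\Beg{j-\sqrt m+1}..\End{j}]|$; the statement for every border follows because the shortest border is a suffix of each. Your route can be repaired in the same spirit: once $r(P)\le\sqrt m/2$, note that the shortest border $B_0$ of $B$ is unbordered and, being a prefix of $B$, would inherit period $|P|$ if $|B_0|>|P|$; hence $|B_0|\le|P|$ and $r(B_0)\le r(P)\le\sqrt m$, giving the short border you need. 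But since this already appeals to the shortest border, working with an arbitrary $B$ and its periodicity group is an unnecessary detour.
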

\begin{proof}
  Let $B$ be the shortest border of $T[\Beg{i}.. \End{j}]$.
  Since $B$ is unbordered, the pseudo period of $B$ equals the RLE size of $B$,
  which is greater than $\sqrt{m}$.
  Thus, $|S_j| \le |B|$ holds, and thus $S_j$ is a suffix of $B$.
  Therefore, $S_j$ is a suffix of each border of $T[\Beg{i}.. \End{j}]$.
\end{proof}

\begin{lemma}\label{lem:occofSj}
  The number of occurrences of $S_j$ in $T$ is in $O(\sqrt{m})$.
\end{lemma}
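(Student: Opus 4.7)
The plan is to reduce counting occurrences of $S_j$ in $T$ to counting occurrences of a canonical shorter suffix. Let $\sigma$ denote the shortest suffix of $\$T[1..\End{j}]$ whose RLE pseudo period exceeds $\sqrt{m}/2 - 1$, so that $S_j$ is the longer of $\sigma$ and $T[\Beg{j-\sqrt{m}+1}..\End{j}]$. First I would dispose of the trivial subcase in which $\sigma$ contains the sentinel $\$$: then necessarily $\sigma = \$T[1..\End{j}]$, which also equals $S_j$ (as the longer option), and $S_j$ has zero occurrences in $T$ since $\$$ does not appear in $T$. Otherwise $\sigma$ is a genuine substring of $T$ and, in both cases of the definition, $\sigma$ is a suffix of $S_j$. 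Consequently every occurrence of $S_j$ in $T$ induces an occurrence of $\sigma$, and it suffices to prove that $\sigma$ has $O(\sqrt{m})$ occurrences in $T$.

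For the core estimate I would show that the anchor RLE indices of consecutive occurrences of $\sigma$ in $T$ differ by $\Omega(\sqrt{m})$. The hypothesis $\pp(\sigma) > \sqrt{m}/2 - 1 \geq 0$ forces $r(\sigma) \geq 2$; write the first run of $\sigma$ as $c^e$. Since the character following this run inside $\sigma$ differs from $c$, any occurrence of $\sigma$ at text position $p$ forces the $T$-run containing $T[p]$ to consist of $c$'s and to end exactly at $p + e - 1$, which assigns a unique anchor $\alpha$ with $p = \End{\alpha} - e + 1$. For two consecutive occurrences at positions $p_1 < p_2$ with anchors $\alpha_1 < \alpha_2$, the standard overlap-is-a-border argument gives $p_2 - p_1 \geq \pi(\sigma) = |P|$, where $P$ is the period prefix of $\sigma$. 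Invoking the stated inequality $r(P) \geq \pp(\sigma)$ from the discussion following the definition of $\pp$, we obtain $r(P) \geq \lceil \sqrt{m}/2 \rceil$.

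The last step converts this RLE-size lower bound on $P$ into a separation bound on anchors. Since $T[p_1..p_1 + |P| - 1] = P$ has RLE size $r(P)$, this substring overlaps exactly $r(P)$ consecutive runs of $T$ starting with $R_{\alpha_1}$; hence $p_1 + |P| - 1$ lies in $R_{\alpha_1 + r(P) - 1}$. Combined with $p_2 > p_1 + |P| - 1$, this forces $\alpha_2 \geq \alpha_1 + r(P) - 1 \geq \alpha_1 + \lceil \sqrt{m}/2 \rceil - 1$. Because anchors live in $[1, m]$, the number of occurrences of $\sigma$ is $O(m / \sqrt{m}) = O(\sqrt{m})$, which also bounds the occurrences of $S_j$.

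The main technical hurdle is the anchor-identification step and its combination with the pseudo-period bound: one must see that each occurrence of $\sigma$ is canonically anchored to a specific run of $T$, and then use the elementary fact that the RLE size of a substring of $T$ equals the number of $T$-runs it touches in order to turn $r(P) = \Omega(\sqrt{m})$ into an anchor gap of the same order. The sentinel-$\$$ bookkeeping and the two-subcase handling of $S_j$ are straightforward but need explicit mention so that the reduction to $\sigma$ is justified in every case.
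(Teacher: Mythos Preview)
Your argument is correct and rests on the same packing idea as the paper's: a pseudo period exceeding $\sqrt{m}/2-1$ forces consecutive occurrences to be $\Omega(\sqrt{m})$ runs apart, so there are $O(\sqrt{m})$ of them. The paper's two-line proof applies this directly to $S_j$, invoking $\pp(S_j)>\sqrt{m}/2-1$ (a fact it only establishes inside the proof of Lemma~\ref{lem:datastrforRM}), and then observes that the overlap of two occurrences, being a border of $S_j$, has RLE size at most $r(S_j)-\pp(S_j)$. You instead reduce to the suffix $\sigma$, for which the pseudo-period bound is immediate from the definition, and then spell out the run-anchor argument explicitly; this makes your proof self-contained at the price of the sentinel bookkeeping and the $r(\sigma)\ge 2$ check. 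One small slip: ``$\pp(\sigma)>0$ forces $r(\sigma)\ge 2$'' fails for a single character (whose pseudo period is $1$), so your reduction to $\sigma$ only works once $\sqrt{m}/2-1\ge 1$; the finitely many smaller $m$ are absorbed into the big-$O$ constant. The off-by-one in $r(P)\ge\lceil\sqrt{m}/2\rceil$ (it should be $\lfloor\sqrt{m}/2\rfloor$) is likewise harmless.
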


\begin{lemma}\label{lem:datastrforRM}
  Given $j > \sqrt{m}$, string $S_j$ and its all occurrences can be computed in $O(\sqrt{m}\log m)$ time
  after $O(m \log m)$ time preprocessing.
\end{lemma}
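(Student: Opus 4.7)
My proof plan proceeds in two phases: an $O(m\log m)$ preprocessing phase and an $O(\sqrt{m}\log m)$ per-query phase. In preprocessing, I would build the RLE shortest border array $\RLESB$ of $T$ via Lemma~\ref{lem:sbord}, the RLE-LCP data structure of Lemma~\ref{lem:rlefuncA}, and the $\tRLESA$ together with the auxiliary arrays that support pattern matching and enumeration of the occurrences of any RLE-bounded factor in $T$ at $O(\log m)$ time per reported occurrence. The total space is $O(m\log m)$.

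To answer a query for $j > \sqrt{m}$, I compute $S_j$ as the longer of its two defining candidates and then retrieve its occurrences. Candidate~(2), namely $T[\Beg{j-\sqrt{m}+1}..\End{j}]$, is specified by two run indices, and its longest border (hence its pseudo period) is computed in $O(\sqrt{m})$ time by running the border-group routine from the proof of Lemma~\ref{lem:sbord} on its $\sqrt{m}$-run RLE. For candidate~(1), I scan the suffixes $T[\Beg{i}..\End{j}]$ whose RLE size lies in $[\lceil\sqrt{m}/2\rceil,\sqrt{m}]$; strictly shorter suffixes are automatically disqualified, since their pseudo period is upper-bounded by their RLE size and therefore cannot exceed $\sqrt{m}/2-1$. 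Each iteration tests in $O(\log m)$ time whether the current suffix's pseudo period exceeds $\sqrt{m}/2-1$, by using RLE-LCP queries to sweep through its $O(\log m)$ border groups in the style of Lemma~\ref{lem:sbord}; the scan therefore costs $O(\sqrt{m}\log m)$ overall. If some suffix passes the test, candidate~(1) has RLE size at most $\sqrt{m}$ and so $S_j$ equals candidate~(2). Otherwise, candidate~(2) must have a border of RLE size at least $\sqrt{m}/2+1$, endowing it with a small character period $p$; I then binary-search leftward using RLE-LCP queries for the maximal extension of candidate~(2) that preserves this period, and take candidate~(1) to be the extension immediately past the break point. Finally, once $S_j$ is identified, I enumerate its occurrences via $\tRLESA$ pattern matching; by Lemma~\ref{lem:occofSj} there are $O(\sqrt{m})$ of them, each reported in $O(\log m)$ time, which fits the target query bound.

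The main obstacle will be the leftward-extension step in the case where candidate~(1) is strictly longer than candidate~(2). A naive extension could traverse $\Theta(m)$ runs, so I must exploit the small character period inherited from candidate~(2)'s long border and argue carefully that the first position at which the period $p$ breaks actually yields a suffix whose pseudo period exceeds $\sqrt{m}/2-1$, rather than one whose periodic structure merely switches to another long period of comparable RLE size. A secondary technicality is showing that, for each of the suffixes in the initial scan, the RLE size of the longest border (or equivalently whether it is too large) can indeed be determined in $O(\log m)$ time by combining RLE-LCP queries with an adapted version of the border-group enumeration of Lemma~\ref{lem:sbord}.
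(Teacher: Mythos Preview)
Your high-level route matches the paper's --- test a short suffix's pseudo period, extend leftward if it is small, then enumerate occurrences via $\tRLESA$ --- but there is one genuine gap and one unnecessary detour.

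The gap is your claim that each suffix in the $[\lceil\sqrt{m}/2\rceil,\sqrt{m}]$ scan can be pseudo-period-tested in $O(\log m)$ time ``in the style of Lemma~\ref{lem:sbord}''. That lemma relies on precomputed border and border-group arrays of one \emph{fixed} RLE string; here the starting run changes with every suffix, and RLE-LCP queries let you \emph{verify} a proposed border but not enumerate the $O(\log m)$ border groups out of thin air. Without additional machinery this step costs $\Theta(\sqrt{m})$ per suffix, hence $\Theta(m)$ for the whole scan, which breaks the time bound.

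The paper simply avoids the scan. It computes only the pseudo period $\pi$ of the single candidate~(2) in $O(\sqrt{m}\log m)$ time and branches on whether $\pi>\sqrt{m}/2-1$. If so, $S_j$ is candidate~(2). If not, it issues a \emph{single} longest-common-suffix query (Lemma~\ref{lem:rlefuncA} applied to the reversal of $T$) between $T[1..\End{j-\sqrt{m}}]$ and the copy shifted by the character period, obtaining in $O(1)$ time the leftmost $s$ for which $T[s..\End{j}]$ still has that period; this replaces your binary search. The ``obstacle'' you correctly flag --- that the one-step extension $T[s-1..\End{j}]$ already has pseudo period $>\sqrt{m}/2-1$ --- is discharged by a short Fine--Wilf argument: if $T[s-1..\End{j}]$ also had pseudo period at most $\sqrt{m}/2-1$, then $T[s..\End{j}]$ would carry two small periods summing to less than its length, forcing the original period to extend to $T[s-1..\End{j}]$ and contradicting the maximality of $s$.
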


\subsubsection{Preprocessing}
We construct
the data structure of Lemma~\ref{lem:datastrforRM}.

\begin{figure}[t!]
  \centering
  \includegraphics[width=0.9\linewidth]{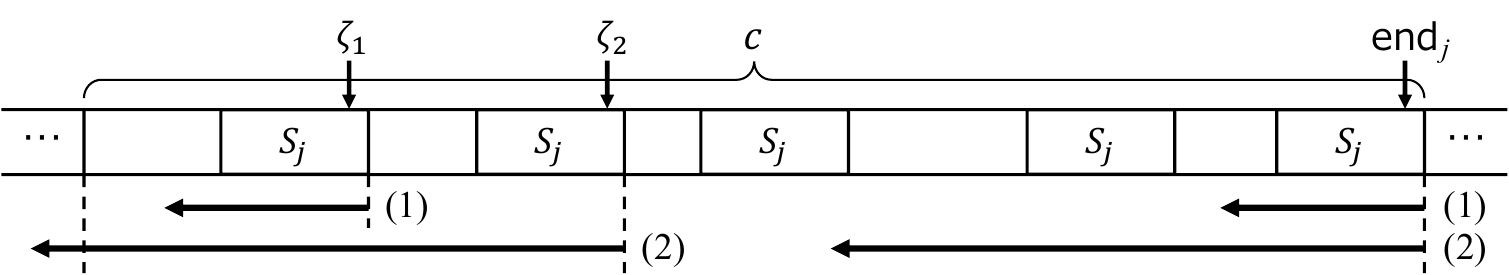} 
  \caption{String $c \in C_j$ is the candidate for an unbordered factor that ends at $\End{j}$ we focus on.
  (1) The lcs of $T[1.. \End{j}]$ and the prefix $T[1.. \zeta_1]$ of $T$ that ends at the leftmost occurrence of $S_j$ in the figure does not reach the starting position of $c$. We cannot yet determine whether $c$ is bordered.
  (2) The lcs of $T[1.. \End{j}]$ and the prefix $T[1.. \zeta_2]$ of $T$ that ends at the second leftmost occurrence of $S_j$ in this figure reaches the starting position of $c$, which reveals that $c$ is bordered.
  }\label{fig:LBcheck}
\end{figure}
\subsubsection{Query algorithm}
Again, let us consider the $k$th stage
and let $x = J_{k-1}.\first$, $y = J_k.\first$, and $z = J_k.\last$.
$D_k = T[\Beg{x}.. \End{z}]$ and $J_k = T[\Beg{y}.. \End{z}]$.
For each $j$ with $y \le j \le z$, 
we compute $S_j$ and its occurrences, and sort them
by using Lemma~\ref{lem:datastrforRM}.
Let $L_j$ be the sorted list of occurrences of $S_j$.
Given a set $C$ of candidates for the longest unbordered factors,
we first radix sort the elements of $C$ using their starting points as the primary key
and their ending points as the secondary key.
Let $C'$ be the sorted list of the elements of $C$.
In the following, we scan elements in $C'$ in order throughout the algorithm.
We denote by $C_j$ the list of elements in $C'$ whose end positions are $\End{j}$.
Then, $C_j$ is a consecutive sub-list of $C'$ from the condition of the radix sorting.
For every $j$ with $y \le j \le z$,
we execute the following, scanning $C_j$ and $L_j$ from left to right:
Let variables $c$ and $\mathit{occ}$ store elements in $C_j$ and $L_j$ we focus on, respectively.
By using the data structure of Lemma~\ref{lem:rlefuncA} for the reversal of $T$,
compute the longest common suffix (lcs) of
$T[1.. \End{j}]$ and the $T[1.. \zeta]$
where $\zeta$ is the ending position of $\mathit{occ}$~(see Fig.~\ref{fig:LBcheck}).
\begin{itemize}
  \item If the lcs does not  reach the starting position of $c$,
    then update $\mathit{occ}$ to the next element in $L_j$.
    If such an element in $L_j$ does not exists,
    then the current $c$ is the longest string in $C_j$
    that has no long border by Lemma~\ref{lem:Sjissuffix}.
  \item If the lcs reaches the starting position of $c$,
    then $c$ is bordered and hence
    update $c$ to the next element in $C_j$ that starts after $\zeta$.
    If such an element in $C_j$ does not exists,
    there is  no unbordered factor in $C_j$.
    So we update $c$ to the next element in $C'$ and increment $j$ accordingly.
\end{itemize}

The query algorithm can be performed in $O(|C| \log |C| + \sum_{y \le j \le z}|L_j|) = O(m\log m)$ time
with $O(m)$ working space for each stage $k$.

Putting all pieces together, we complete the proof of Theorem~\ref{thm:main}.
 \begin{credits}
\subsubsection{\ackname}
This work was supported by JSPS KAKENHI Grant Number JP24K20734~(TM).
\end{credits}

\appendix
\section{Omitted proofs}\label{app:proof}
\begin{proof}[of Lemma~\ref{lem:rlefuncA}]
  {Suppose that an RLE-compressed string $\rle(T)$ of RLE size $m$ is given.
  In the preprocessing phase, we construct length-$m$ arrays $\tRLESA$, $\tRLEISA$, and $\tRLELCP$ of string $T$, defined as follows:
  The \emph{truncated RLE suffix array} $\tRLESA$ of $T$ is the integer array such that
  $\tRLESA[r] = k$ iff the $r$th lexicographically smallest suffix among
  $\mathcal{S} = \{T[\End{x}.. n] \mid 1 \le x \le m\}$ is $T[\End{k}.. n]$.
For convenience, we denote by $T\langle r \rangle = T[\End{\tRLESA[r]}.. n]$
  the $r$th lexicographically smallest suffix among $\mathcal{S}$.
The \emph{truncated RLE inverse suffix array} $\tRLEISA$ of $T$ is the integer array
  such that $\tRLEISA[k] = r$ iff $\tRLESA[r] = k$.
  The \emph{truncated RLE longest common prefix array} $\tRLELCP$ of $T$ is the integer array
  such that $\tRLELCP[1] = -1$ and $\tRLELCP[r] = \lcp(T\langle r \rangle, T\langle r-1 \rangle)$ for $2 \le r \le m$.
  We further construct an RmQ (range \emph{minimum} query) data structure on $\tRLELCP$.
Given query integers $i, p, j$, and $q$,
  we first compare $\Exp{i}-p+1$ and $\Exp{j}-q+1$.
  If they are not equal, the lcp value is the minimum of those values.
  Otherwise, let $r_1 =\min\{\tRLEISA[i], \tRLEISA[j]\}$ and $r_2 = \max\{\tRLEISA[i], \tRLEISA[j]\}$.
  Then, the lcp value to return is
  $\Exp{i}-p+\min\{\tRLELCP[r] \mid r_1 + 1 \le r \le r_2\}$,
  which can be computed in $O(1)$ time by answering RmQ on $\tRLELCP$.
}\end{proof}
\begin{proof}[of Lemma~\ref{lem:rlefuncB}]
  {Given $\rle(T)$ of RLE size $m$ and integers $x, y$, we consider string $T' = T\$T[\Beg{x}.. \End{y}]$ whose RLE size is at most $2m+1$.
  Then, if a string occurs within $T[\Beg{x}.. \End{y}]$, then it also occurs at position $p > n+1$ in $T'$.
  As in the proof of Lemma~\ref{lem:rlefuncA}, we construct three arrays of $T'$ and RmQ data structure on $\tRLELCP$.
  In addition, we construct an RMQ (range \emph{maximum} query) data structure on $\tRLESA$.
  Given query integers $h$ and $\ell$, we first obtain $r_h = \tRLEISA[h]$.
  Then, we compute the maximal range $[r_1, r_2]\subset [1, m]$ such that 
  $\lcp(T\langle r_1 \rangle, T\langle r_h \rangle) \ge \ell +1$ and
  $\lcp(T\langle r_h \rangle, T\langle r_2 \rangle) \ge \ell +1$ hold.
  The range $[r_1, r_2]$ can be computed in $O(\log m)$ time by combining binary search and RmQ on $\tRLELCP$.
Next, we compute the maximum value $r_3 \in [1, r_1-1]$ such that $\lcp(T\langle r_3 \rangle, T\langle r_h \rangle) \le \ell$ and $\tRLESA[r_3] \ge m+2$.
  The value $r_3$ can be computed in $O(\log m)$ time by combining binary search and RMQ on $\tRLESA$ in range $[1, r_1-1]$.
  Similarly, we compute the minimum value $r_4 \in [r_2+1, m]$
  such that $\lcp(T\langle r_h \rangle, T\langle r_4 \rangle) \le \ell$ and $\tRLESA[r_4] \ge m+2$ in $O(\log m)$ time.
  Finally, we return $\tRLESA[r^\star]$ where
  $r^\star = \arg\max_{r \in \{r_3, r_4\}}\{\lcp(T\langle r \rangle, T\langle r_h \rangle)\}$.
}\end{proof}

\begin{proof}[of Lemma~\ref{lem:cand}]
  (1) Consider the case where there is an unbordered $(i, k)$-factor,
  and let $U$ be the longest unbordered $(i,k)$-factor.
  For the sake of contradiction, assume that $U$ is shorter than $\Candf_k(i)$.
  Then, $\Candf_k(i)$ has a long border since it has no short border.
  However, the suffix-occurrence of the long border starts before $J_k$ since $r(J_k) = \sqrt{m}$,
  which contradicts that $U$ is unbordered.
  Thus, $\Candf_k(i)$ is the longest unbordered $(i,k)$-factor in this case.
(2) If all $(i, k)$-factors have short borders, then $\Candf_k(i)$ returns $\varepsilon$ by definition.
  (3) Otherwise, i.e., all $(i, k)$-factors have some borders.
  By definition, $\Candf_k(i)$ has no short border, thus it must have a long border.
\end{proof}

\begin{proof}[of Lemma~\ref{lem:geo}]
  In the preprocessing phase,
  we first sort the segments by their weights. Then, we construct the segment tree of $S$ with $N$ leaves
  by inserting segments in the sorted order. Each segment stored in a node is represented as a tuple $(w, y, \mathsf{id})$
  where $w$ is the weight of the segment,
  $y$ is the y-coordinate of the segment,
  and $\mathsf{id}$ is the unique id of the segment in $S$.
The resulting segment tree satisfies the condition that
  segments stored in each node are sorted by their weights.
  Subsequently, for each node of the segment tree,
  we build an RmQ data structure for the list of y-coordinates stored in the node.
  Lastly, we apply the fractional cascading to the series of tuples over the segment tree,
  using weight $w$ as the key.

  Given a query $(v, w_1, w_2)$, we first find the path $\pi$ from the root to the leaf that corresponds to $v$ in the segment tree.
  We then traverse $\pi$ starting from the root.
  At each node $u \in \pi$,
  we compute the range corresponding to $[w_1, w_2]$ in the list of tuples in $u$ by using the fractional cascading structure,
  and compute the smallest y-coordinate within the range by using the RmQ data structure.
  At the end, we output the $\mathsf{id}$ that corresponds to the smallest y-coordinate within the minima.

The size of the data structure is $O(|S| \log |S|)$, which is dominated by the segment tree.
  The construction time is $O(|S| \log |S|)$:
  sorting the segments takes $O(|S| \log |S|)$ time,
  building the segment tree requires $O(|S| \log |S|)$ time,
  and constructing the RmQ data structures and applying fractional cascading take time linear in the total size of the lists,
  i.e., $O(|S| \log |S|)$.
Given a query, we first find the root-to-leaf path $\pi$ of length $O(\log |S|)$.
  Detecting all the ranges corresponding to $[w_1, w_2]$ over the lists in $\pi$ can be done in $O(\log |S|)$ time
  thanks to fractional cascading.
  We then perform an $O(1)$-time RmQ $|\pi|$ times,
  which takes $O(\log |S|)$ time.
  Thus, the query time is $O(\log |S|)$ in total.
\end{proof}

\begin{proof}[of Lemma~\ref{lem:occofSj}]
  Since the pseudo period $\pp$ of $S_j$ is greater than $\sqrt{m}/2 - 1$,
  the RLE size of the overlap of any two occurrences of $S_j$ is at most $r(S_j) - \pp < r(S_j) - \sqrt{m}/2 + 1$. Thus, the lemma holds.
\end{proof}

\begin{proof}[of Lemma~\ref{lem:datastrforRM}]
  We first compute the RLE pseudo period $\pi$ of
  \linebreak
  $T[\Beg{j - \sqrt{m} + 1}.. \End{j}]$ in $O(\sqrt{m}\log m)$ time.
If $\pi > \sqrt{m}/2 -1$, then
  \linebreak
  $S_j = T[\Beg{j - \sqrt{m} + 1}.. \End{j}]$.
  Otherwise, we compute the 
  longest suffix $T[s.. \End{j}]$ of $T[1.. \End{j}]$ whose RLE pseudo period is $\pi$
  by computing the longest common \emph{suffix} of
  $T[1.. \End{j-\sqrt{m}}$ and $T[1.. \iota]$
  where $\iota$ is the ending position of the prefix-occurrence of the border of $T[\Beg{j - \sqrt{m} + 1}.. \End{j}]$.
  Such suffix can be computed by using the lcp data structure of Lemma~\ref{lem:rlefuncA} for the reversal of $T$.
If $s = 1$, then we have $S_j =\$T[1.. \End{j}]$.
  Otherwise, we show the following claim:
  \begin{claim}
    The RLE pseudo period $\psi$ of $T[s-1.. \End{j}]$ is greater than $\sqrt{m}/2 -1$.
  \end{claim}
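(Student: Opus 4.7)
The plan is to argue by contradiction: assume $\psi \le \sqrt{m}/2 - 1$ and derive a contradiction, most likely by showing that $T[s-1..\End{j}]$ would then have RLE pseudo period $\pi$ (contradicting the maximality of $s$) or by producing two incompatible periodic structures.

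First, I would record useful bounds. Since $T[\Beg{j-\sqrt{m}+1}..\End{j}]$ is itself a suffix of $T[1..\End{j}]$ of RLE size $\sqrt{m}$ with pseudo period $\pi$, the maximality of $s$ forces $s \le \Beg{j-\sqrt{m}+1}$, and hence $r := r(T[s..\End{j}]) \ge \sqrt{m}$. Writing $u := T[s..\End{j}]$, $v := T[s-1..\End{j}]$, $c := T[s-1]$, and $r' := r(v) \in \{r, r+1\}$, the longest border $B$ of $u$ has RLE size $r - \pi \ge \sqrt{m}/2 + 1$, and under the hypothetical assumption, the longest border $B'$ of $v$ has RLE size $r' - \psi \ge \sqrt{m}/2 + 1$.

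Next I would split on whether $c$ merges with the first run of $u$. When $c \ne u[1]$ (so $r' = r + 1$ and the first run of $v$ is $c^1$), the first run of $B'$ must also be $c^1$ since $v[2] = u[1] \ne c$; stripping this leading $c$ shows $B'[2..|B'|]$ is simultaneously a prefix and a suffix of $u$, hence a border of $u$ of RLE size $r - \psi$. Comparing $B$ and $B'$ as suffixes of $v$ and ruling out $|B| \le |B'|$ via the first-character mismatch (the first run of $B'$ is $c^1$, whereas $B$ starts with $u[1] \ne c$), I obtain $|B| > |B'|$ and hence $\psi \ge \pi + 1$. A Fine and Wilf style argument applied to the period $p_u = |u| - |B|$ of $u$ and the period $p_v = |v| - |B'|$ of $v$ (which also extends to a period of $u$ since $u$ is a suffix of $v$) then forces $p_u \mid p_v$, which combined with the RLE structure constraints pins $\psi$ high enough. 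When $c = u[1]$ (so $r' = r$), I would first prove a left-symmetric analogue of Observation~\ref{obs:contract}, namely that if $w[1] \cdot w$ has a border of RLE size $p$ then so does $w$, to conclude $\psi \ge \pi$ and thus $\psi > \pi$, and again combine with Fine and Wilf to upgrade to $\psi > \sqrt{m}/2 - 1$.

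The main obstacle is bridging the easy bound $\psi > \pi$ (or $\psi \ge \pi + 1$) to the desired $\psi > \sqrt{m}/2 - 1$, since $\pi$ itself may be nearly $\sqrt{m}/2 - 1$. Resolving this cleanly requires combining Lemma~\ref{lem:periodofborder}, which organizes the long borders of $u$ and $v$ into $O(\log m)$ periodicity groups, with Fine and Wilf applied to the actual (length) periods $p_u$ and $p_v$; the key observation is that since both $u$ and $v$ carry borders of RLE size at least $\sqrt{m}/2 + 1$, their actual periods are small enough relative to the common border to satisfy the Fine--Wilf hypothesis, forcing $p_u \mid p_v$ and constraining the RLE structure of $B'$ tightly enough to contradict $\psi \le \sqrt{m}/2 - 1$.
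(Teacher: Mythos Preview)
You land on the right tool (Fine--Wilf on the two actual periods), but you bury it under machinery that is not needed and you leave the closing step vague. The paper's argument is short and direct. Assuming $\psi \le \sqrt{m}/2 - 1$, one has (using also $\pi \le \sqrt{m}/2 - 1$ and $\psi > \pi$) that $(\pi+1)+(\psi+1) < \sqrt{m} \le r(T[s..\End{j}])$. If $\mathsf{p}$ and $\mathsf{q}$ denote the smallest periods of $u := T[s..\End{j}]$ and $v := T[s-1..\End{j}]$, then the length-$\mathsf{p}$ and length-$\mathsf{q}$ prefixes of $u$ have RLE sizes at most $\pi+1$ and $\psi+1$ respectively, so $\mathsf{p}+\mathsf{q} < |u|$. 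Since $u$ is a suffix of $v$, both $\mathsf{p}$ and $\mathsf{q}$ are periods of $u$; Fine--Wilf gives that $\gcd(\mathsf{p},\mathsf{q})$ is a period of $u$, and minimality of $\mathsf{p}$ forces $\mathsf{p}\mid\mathsf{q}$. Then $T[s-1] = T[s-1+\mathsf{q}] = T[s-1+\mathsf{p}]$ (first equality from period $\mathsf{q}$ of $v$, second from period $\mathsf{p}$ of $u$), so $v$ itself has period $\mathsf{p}$, contradicting the maximality in the choice of $s$.

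Against this, your case split on whether $c$ merges with the first run of $u$ and your effort to prove $\psi \ge \pi+1$ by stripping the leading character of $B'$ are superfluous: the inequality $\psi > \pi$ is used only once, to make $(\pi+1)+(\psi+1) < \sqrt{m}$ strict, and the paper simply asserts it from the definition of $s$. Your appeal to Lemma~\ref{lem:periodofborder} (the $O(\log m)$ border groups) is irrelevant; nothing in the argument touches border groups. Most importantly, the ``main obstacle'' you identify --- bridging $\psi > \pi$ up to $\psi > \sqrt{m}/2 - 1$ --- is a misdiagnosis of the proof structure. You never perform that bridge. The assumed bound $\psi \le \sqrt{m}/2 - 1$ is exactly what, together with $\pi \le \sqrt{m}/2 - 1$, makes the Fine--Wilf hypothesis hold; the contradiction then lands on the \emph{maximality of $s$} (via $v$ acquiring period $\mathsf{p}$), not on the assumed bound on $\psi$ itself. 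Your last sentence (``constraining the RLE structure of $B'$ tightly enough to contradict $\psi \le \sqrt{m}/2 - 1$'') points in the wrong direction and leaves the actual contradiction unstated.
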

  \begin{proof}[of Claim]
    Now we consider the RLE pseudo period $\psi$ of $T[s-1.. \End{j}]$ which satisfies $\psi > \pi$.
    For the sake of contradiction, we assume that $\psi \le \sqrt{m}/2 -1$ holds.
    {Then, it holds that
      $\pi + 1 + \psi + 1 < \sqrt{m} \le r(T[s.. \End{j}])$ since $\psi > \pi$.
      Let $\mathsf{p}$ and $\mathsf{q}$ be the periods of $T[s.. \End{j}]$ and $T[s-1.. \End{j}]$, respectively.
      Further, let $\pi'$ and $\psi'$ be the RLE sizes of $T[s.. s+\mathsf{p}-1]$ and $T[s.. s+\mathsf{q}-1]$, respectively.
      Then, $\pi' \le \pi + 1$ and $\psi' \le \psi'+1$ hold by the definition of RLE pseudo borders.
      Thus, we obtain
      $\pi' + \psi' \le \pi + 1 + \psi + 1 < r(T[s.. \End{j}])$.
      Therefore, $\mathsf{p} + \mathsf{q} < |T[s.. \End{j}]|$ holds.
    }From the periodicity lemma~\cite{fine1965uniqueness}, $\gcd(\mathsf{p}, \mathsf{q})$ is a period of $T[s.. \End{j}]$
    where $\gcd(\mathsf{p}, \mathsf{q})$ is the greatest common diviser of $\mathsf{p}$ and $\mathsf{q}$.
    Since $\mathsf{p}$ is the smallest period of $T[s.. \End{j}]$,
    $\mathsf{q}$ is a multiple of $\mathsf{p}$.
    Then, $T[s-1.. \End{j}]$ also has period $\mathsf{p}$.
    However, the RLE pseudo period of $T[s-1.. \End{j}]$ is $\pi$, which contradicts the definition of $s$.
  \end{proof}
  Namely, $S_j = T[s-1.. \End{j}]$ holds if $s \ge 2$.
  Finally, we compute each occurrence of $S_j$ in $O(\mathit{occ} + \log m)$ time
  by using the index structure based on tRLESA~\cite{TamakoshiGIBT15},
  where $\mathit{occ}$ is the number of occurrences of $S_j$ in $T$.
  The total running time is $O(\sqrt{m} \log m)$ since $\mathit{occ} = O(\sqrt{m})$ by Lemma~\ref{lem:occofSj}.
\end{proof}
 
\end{document}